\newcommand{\CR}{{\rm CR}}
\newtheorem{theorem}{Theorem}[section]
\newtheorem{definition}[theorem]{Definition}
\newtheorem{lemma}[theorem]{Lemma}
\newtheorem{proposition}[theorem]{Proposition}
\newtheorem{corollary}[theorem]{Corollary}
\newtheorem{algorithm}[theorem]{Algorithm}
\journal{Information Science}
\begin{document}

\begin{frontmatter}

%% Title, authors and addresses

%% use the tnoteref command within \title for footnotes;
%% use the tnotetext command for the associated footnote;
%% use the fnref command within \author or \address for footnotes;
%% use the fntext command for the associated footnote;
%% use the corref command within \author for corresponding author footnotes;
%% use the cortext command for the associated footnote;
%% use the ead command for the email address,
%% and the form \ead[url] for the home page:
%%
%% \title{Title\tnoteref{label1}}
%% \tnotetext[label1]{}
%% \author{Name\corref{cor1}\fnref{label2}}
%% \ead{email address}
%% \ead[url]{home page}
%% \fntext[label2]{}
%% \cortext[cor1]{}
%% \address{Address\fnref{label3}}
%% \fntext[label3]{}

\title{Capacity Factors of a Point-to-point Network}

%% use optional labels to link authors explicitly to addresses:
%% \author[label1,label2]{<author name>}
%% \address[label1]{<address>}
%% \address[label2]{<address>}

\author[cs, iipl]{Yuan Li}
\ead{07300720173@fudan.edu.cn}
\author[mt]{Yue Zhao}
\ead{07300720233@fudan.edu.cn}
\author[cs, iipl]{Haibin Kan\corref{cor1}}
\ead{hbkan@fudan.edu.com}
\address[cs]{School of Computer Science, Fudan University, Shanghai 200433, China}
\address[mt]{School of Mathematical Science, Fudan University, Shanghai 200433, China}
\address[iipl]{Shanghai Key Lab of Intelligent Information Processing, Fudan University, Shanghai 200433, China}

\cortext[cor1]{Corresponding author}

\address{}

\begin{abstract}
%% Text of abstract
In this paper, we investigate some properties on capacity factors,
which were proposed to investigate the link failure problem from network coding.
A capacity factor (CF) of a network is an edge set, deleting which will cause the
maximum flow to decrease while deleting any proper subset will not. Generally,
a $k$-CF is a minimal (not minimum) edge set which will cause the network maximum flow decrease by $k$.

Under point to point acyclic scenario, we characterize all the edges which are
contained in some CF, and propose an efficient algorithm to classify.
 And we show that all edges on some $s$-$t$ path in an acyclic point-to-point acyclic network are
 contained in some $2$-CF.
We also study some other properties of CF of point to point network, and a simple relationship with CF in multicast
network.

On the other hand, some
computational hardness results relating to capacity factors are
obtained. We prove that deciding whether there is a capacity factor of a cyclic network
with size not less a given number is NP-complete, and the time
complexity of calculating the capacity rank is lowered bounded by
solving the maximal flow. Besides that, we propose the analogous definition of
CF on vertices and show it captures edge capacity factors as a special case.
\end{abstract}

\begin{keyword}
%% keywords here, in the form: keyword \sep keyword
 network coding \sep capacity factor \sep capacity rank \sep point-to-point network \sep
 multicast network \sep NP-complete \sep vertex capacity factor.
%% MSC codes here, in the form: \MSC code \sep code
%% or \MSC[2008] code \sep code (2000 is the default)

\end{keyword}

\end{frontmatter}

%%
%% Start line numbering here if you want
%%
% \linenumbers

%% main text
\section{Introduction}
Reliability is a critical theme in the topology design of
communication networks. In traditional combinatorial network theory,
we use the concept edge-connectivity to evaluate the reliability of
a network in case of edge failures \cite{Boe86}. A lot of intense
studies have been focused on the connectivity of a network and
extending concepts such as super connectivity and conditional
connectivity \cite{Boe86}, \cite{EH98}, \cite{Har83}.

The more recent study of network coding shows that \cite{ACLY00} by
coding at internal vertices, one can achieve the optimal capacity of
a multicast network, which is upper bounded by the maximum flow (minimum cut) of the network. However, this optimal capacity can not
be achieved by traditional routing scheme.

Here comes a question: for a network coding-based network, is it
still appropriate to evaluate its reliability by traditional
concepts like edge-connectivity as mentioned above? For traditional
networks where only routing schemes are adopted, the communication
will not be ruptured in case of edge failures as long as at least
one path from source to the sink vertex still exists. However, in
network coding-based network, the communication will be degraded
even if the failures of an edge set may reduce the number of
disjoint paths between source and sink vertices for the network
capacity is decreased.

Koetter et al. \cite{KM03} first mentioned the edge failure problems
in network coding-based networks. Cai and Fan \cite{CF07} formally
proposed the concept of capacity factor and capacity rank. The
capacity rank characterizes the criticality of a link for the
network communication. When there is no capacity factor containing
an edge, the capacity rank of this edge is defined as $\infty$.
Recently, we notice some related work in \cite{CMVZ12} about $k$-Route Cut,
which is the minimum number of edges to let the connectivity of every pair of source and sink
falls below $k$. They not only propose some approximation algorithms, but
also prove some computational hardness results.
In fact, the generalized definition of $k$-CF captures some definitions such
as $k$-route cute when the connectivity is defined as maximal number of
edge-disjoint paths.

%According to
%the capacity rank, all edges in a network can be decomposed into two
%sets, named the $D$-set and the $H$-set. All
%edges in $D$-set have finite capacity rank and the others
%are in the $H$-set. Therefore, deletions from
%$H$-set will not decrease the capacity of a network
%coding-based network, while deletions from $D$-set may
%decrease the capacity.

In \cite{CF07}, the authors proposed an open problem which
is deciding the capacity rank of a given edge. In this paper, we
obtain an equivalent condition that deciding whether $\CR(e)=\infty$. By this result, it is easy
to develop an algorithm in $O(V^3)$ to determine whether the capacity rank of a given edge is finite or
infinite, which partially answers the open problem in \cite{CF07}.
Although neither can we find an efficient algorithm to compute the
capacity rank of a general network, nor can we prove the problem is
NP-hard, we obtain some computational hardness results relating to
it. For example, we prove that deciding whether there is a capacity
factor with size not less than a given number is NP-complete, and
show the time complexity of computing the capacity rank is lower
bounded by that of solving maximal flow.

Even though there is no benefit
of network coding on single-source single-sink network, our results
are mainly focused on single-source single-sink scenario. There are
mainly two reasons: the study on point-to-point network may bring insights
into multicast scenario;
the CF is a natural concept, which might be interesting on its own right.

This paper is organized as follows. In section
\uppercase\expandafter{\romannumeral2} , we review some basic
definitions, notations and related results. In section
\uppercase\expandafter{\romannumeral3}, we investigate some
properties of capacity factors, including both point-to-point scenario and multicast
scenario.
In section \uppercase\expandafter{\romannumeral4}, we propose an
algorithm to calculate the $D$-set and $H$-set
with its correctness proved and time complexity analyzed.
In section
\uppercase\expandafter{\romannumeral5}, we present some computational hardness results relating to
capacity factors.

\section{Preliminaries}
In this section, we review some basic definitions, notations and
results, which will be used in the sequel.

A communication network is a collection of directed links connecting
transmitters, switches, and receivers. It is often represented by a
4-tuple $\mathcal{N}=(V,E,S,T)$, where $V$ is the vertex (node) set,
$E$ the edge (link) set, $S$ the source vertex set and $T$ the sink
vertex set. A communication network $\mathcal{N}$ is called a
point-to-point communication network if $|S|=|T|=1$, denoted by
$\mathcal{N}=(V,E,s,t)$, where $s$ is the source vertex and $t$ the
sink vertex.

Without loss of generality, we may assume that all links in a
network have the same capacity, $1$ bit per transmission slot. For
$u, v \in V$, denote by $\langle u,v \rangle$ the edge from $u$ to
$v$. If there are $k$ edges from $u$ to $v$, we denote by $\langle
u, v \rangle_k$ the set consisting of edges from $u$ to $v$, or
simply denote it by $\langle u,v \rangle$ when there is no
ambiguity. For an edge $e= \langle u,v \rangle$, $u$ is called the
tail of $e$ and denoted by $tail(e)$, and $v$ is called the head of
$e$ and denoted by $head(e)$.

If $F \subseteq E$, denote by $\mathcal{N}\backslash F$ the network
obtained by deleting edges in $F$ from $\mathcal{N}$. If $V'
\subseteq V$, denote by $\mathcal{N}(V')$ the network consisting of
vertices in $V'$ and the edges among $V'$ of $\mathcal{N}$, calling
the vertex-induced network of $\mathcal{N}$ by $V'$. For $V_1, V_2
\subseteq V$, denote by $[V_1, V_2]$ the set consisting of all links
with tails in $V_1$ and heads in $V_2$. For a network
$\mathcal{N}=(V,E,S,T)$, an $S$-$T$ cut of $\mathcal{N}$ is $[V_1,
\overline{V_1}]$ such that $V_1$ is a subset of $V$ containing all
vertices in $S$ but not containing any vertex in $T$. A minimal
$S$-$T$ cut of $\mathcal{N}$ is an $S$-$T$ cut with the minimal
size, denoted by $C_{\mathcal{N}}(S,T)$.

It is well known that, for a point-to-point network
$\mathcal{N}=(V,E,s,t)$, the maximal flow from $s$ to $t$ is equal
to the minimal $s$-$t$ cut of $\mathcal{N}$ and a feasible flow is a
maximal flow if and only if there is no augmenting path in the
corresponding residual network (\emph{Max-flow Min-cut Theorem
\cite{CLRS01}, \cite{Wes01}}). If each link in $\mathcal{N}$ has
unit capacity, then the maximal flow $f$ of $\mathcal{N}$
corresponds to $|f|$ edge-disjoint paths from $s$ to $t$ in
$\mathcal{N}$ (\emph{Integrality Theorem \cite{Wes01}}), where $f$
denotes a collection of edge-disjoint paths (a flow) and $|f|$
denotes the number of the paths. Throughout the paper, we always
assume each link has unit capacity.

Let $\mathcal{N}=(V,E,s,t)$ be a point-to-point network. For any
vertex $v \in V$, we can assume that there exists a path from $s$ to
$t$ in $\mathcal{N}$ which passes the vertex $v$. Otherwise, we can
delete the vertex $v$ because $v$ is useless for the communication
between $s$ and $t$ in $\mathcal{N}$. Similarly, for any edge $e \in
E$, we can assume that there exists a path form $s$ to $t$ which
passes the edge $e$.

\begin{definition}\cite{CF07}
Let $\mathcal{N}=(V,E,s,t)$ be a point-to-point network. A nonempty
subset $F$ of $E$ is a capacity factor of $\mathcal{N}$ if and only
if the following two conditions hold:
\begin{enumerate}
\item $C_{\mathcal{N} \backslash F}(s,t) < C_{\mathcal{N}}(s,t)$;
\item $C_{\mathcal{N} \backslash F'}=C_{\mathcal{N}}(s,t)$ for any
proper subset $F' of F$.
\end{enumerate}
$\mathcal{N} \backslash F$ denotes the induced network formed by
deleting $F$ in $\mathcal{N}$.
\end{definition}

By this definition, for a capacity factor $F$, adding any one edge
$e \in F$ in the point-to-point network $\mathcal{N} \backslash F$
will increase the maximal flow. Since adding one edge can increase
the maximal flow by at most $1$, we have $C_{\mathcal{N} \backslash
F}(s,t) = C_{\mathcal{N}}(s,t)-1$.

Generally, we can define $k^\text{th}$ order capacity factor ($k$-CF) of
as follows, where the motivation will be clear in the multicast scenario.

\begin{definition}
Let $\mathcal{N}=(V,E,s,t)$ be a point-to-point network. A nonempty
subset $F$ of $E$ is a $k^\text{th}$ order capacity factor ($k$-CF) of $\mathcal{N}$ if and only
if the following two conditions hold:
\begin{enumerate}
\item $C_{\mathcal{N} \backslash F}(s,t) \le C_{\mathcal{N}}(s,t) - k$;
\item $C_{\mathcal{N} \backslash F'} > C_{\mathcal{N}}(s,t) - k$ for any
proper subset $F' of F$.
\end{enumerate}
$\mathcal{N} \backslash F$ denotes the induced network formed by
deleting $F$ in $\mathcal{N}$.
\end{definition}

Since the network coding capacity of a single-source multi-sink
network $\mathcal{N}=(V, E, s, T)$, where $T=(t_1, \ldots, t_m)$, is
upper bounded by the minimal of the maximal flow from $s$ to $t_i$ \cite{ACLY00}, i.e.,
the capacity region is $$(C_\mathcal{N}(s, t_1), C_\mathcal{N}(s, t_2), \ldots, C_\mathcal{N}(s, t_m)).$$
Therefore, we have the following definition of $k$-CF on a
multicast network.

\begin{definition} $\mathcal{N}=(V,E,s,T)$ is a multicast network and $T=\{t_1, \ldots, t_m\}$,
edge set $F$ is a $(k_1, k_2, \ldots, k_m)^{\text{th}}$ order capacity factor ($\overrightarrow{k}$-CF) of $\mathcal{N}$ if and only if:
\begin{itemize}
\item[(1)] For all $1 \le i\le m, C_{\mathcal{N}\backslash F}(s, t_i) \leq C_\mathcal{N}(s, t_i)-k_i$;
\item[(2)] For any $F' \subsetneq F$, there exists $i \in \{1, \ldots, m\}$, such that $C_{\mathcal{N}\backslash F'}(s,
t_i) > C_\mathcal{N}(s, t_i)-k_i$.
\end{itemize}
When $\overrightarrow{k} \neq 0$ and $k_i \le 1$, we say $F$ is a CF of multicast network $\mathcal{N}$.
\end{definition}

One could easily generalize the above definition to multi-source multi-sink network. This generalized
definition captures the $k$-route cut problem of edge-connectivity version, i.e., $k$-route cut is
the minimum $\overrightarrow{k}$-CF where $\overrightarrow{k}=(f_1-k+1, \ldots, f_t-k+1)$.

%It's easy to see that a common capacity factor is a $1$-CF. If $F$ is a $k$-CF of $\mathcal{N}(V,E,s,t)$ and %$|F|=k$ then $\forall e \in F,\ \{ e\} \ is\ a\ CF$.\\

Following is the definition of D-set and H-set, which is simply the
edge union of all CFs and all the remaining ones.

\begin{definition}\cite{CF07}
Let $\mathcal{N}=(V,E,s,T)$ be a point-to-point or multicast communication
network. The collection of all its capacity factors $
\mathcal{D}=\{F_1, F_2, \ldots, F_r\}$ is called the
capacity factor set of $\mathcal{N}$. While
$D=\bigcup^{r}_{i=1}{F_i}$ is called the $D$-set
of $\mathcal{N}$ and $H=E \backslash D$ is
called the $H$-set of $\mathcal{N}$.
\end{definition}

By the definition, it is not difficult to see that $C_{\mathcal{N}
\backslash H}(s,t)=C_{\mathcal{N}}(s,t)$. Thus, the edge
set of a point-to-point network can be decomposed into two disjoint
parts, namely $D$-set and $H$-set, which
represent the relatively important links and the unimportant links.
 However, this classification is a little
rough. The next definition gives a concept characterizing the
criticality of a link more precisely.

\begin{definition}\cite{CF07} Let $\mathcal{N}=(V,E,s,t)$ be a point-to-point
network. The capacity rank of a edge $e\in E$ is the minimum size of
the capacity factors containing $e$, denoted by
$CR_{\mathcal{N}}(e)$ or $CR(e)$ when there is no ambiguity. If
there is no capacity factor containing $e$, we define
$CR(e)=\infty$.
\end{definition}

The links with smaller capacity ranks are of higher criticality. How
to calculate the capacity rank of a given edge? As far as we know, the
problem is still open. A direct idea
is to find all the capacity factors and then decide the capacity
rank of each edge. The following example shows it's impractical
since the number of capacity factors may grow exponentially with the
size of network. 
%In fact, by reduction from SUBSET-SUM problem, it is
%proved that deciding the existence of a CF with a given size is
%NP-complete \cite{LHKW08}.

\begin{figure}[h]
\begin{center}
\begin{tikzpicture}
[inner sep=1mm,
cir/.style={circle,draw,thick},
pre/.style={->,>=stealth',semithick},
post/.style={dotted,->,>=stealth',semithick}]
\node (s) at (-4,0) [cir,label=left:$s$]{};
\node (v1) at (-2,0) [cir,label=above:$s^{'}$]{};
\node (v2) at (0,2) [cir]{};
\node (v3) at (0,1) [cir]{};
\node (v4) at (0,0) [cir]{};
\node (v5) at (0,-1) [inner sep=0.3mm,cir]{};
\node (v6) at (0,-2) [cir]{};
\node (v7) at (2,2) [cir]{};
\node (v8) at (2,1) [cir]{};
\node (v9) at (2,0) [cir]{};
\node (v10) at (2,-1) [inner sep=0.3mm,cir]{};
\node (v11) at (2,-2) [cir]{};
\node (t) at (4,0) [cir,label=right:$t$]{};
\draw [pre] (s) to node[above]{$e_0$} (v1);
\draw [pre] (v1) to [bend left=30] node[above]{$e_1$} (v2);
\draw [pre] (v2) to node[above]{$e_2$} (v7);
\draw [pre] (v7) to [bend left=30] node[above]{$e_3$} (t);
\draw [pre] (v1) to [bend left=20] node[above]{$e_4$} (v3);
\draw [pre] (v3) to node[above]{$e_5$} (v8);
\draw [pre] (v8) to [bend left=20] node[above]{$e_6$} (t);
\draw [pre] (v1) to node[above]{$e_7$} (v4);
\draw [pre] (v4) to node[above]{$e_8$} (v9);
\draw [pre] (v9) to node[above]{$e_9$} (t);
\draw [post] (v1) to [bend right=20] (v5);
\draw [post] (v5) to (v10);
\draw [post] (v10) to [bend right=20] (t);
\draw [pre] (v1) to [bend right=30] node[right]{$e_{3n-2}$} (v6);
\draw [pre] (v6) to node[above]{$e_{3n-1}$} (v11);
\draw [pre] (v11) to [bend right=30] node[left]{$e_{3n}$} (t);
\end{tikzpicture}
\caption{Network with exponential number of 1-CFs}
\label{fig:unit_network}
\end{center}
\end{figure}
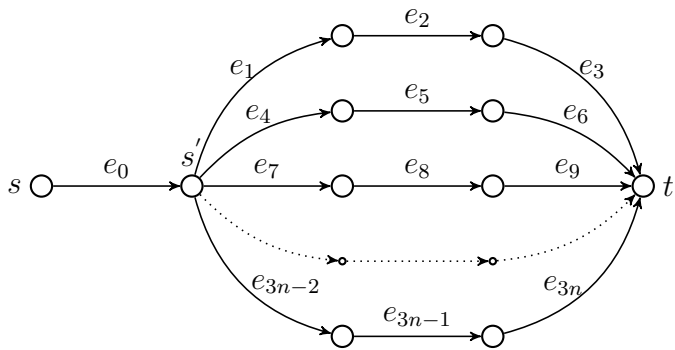

Consider the network $\mathcal{N}=(V,E,s,t)$ shown in Figure \ref{fig:unit_network}, where $|V|=2n+3, |E|=3n+1$. The network can
be decomposed into $n$ internally disjoint paths from $s'$ to $t$
and an individual edge $e_0$. The maximal flow of $\mathcal{N}$ is
$1$. Let $F = \{e_{i_1}, e_{i_2}, \ldots, e_{i_n}\}$, $3(k-1)+1 \leq
i_k \leq 3k$ for $k = 1, 2, \ldots, n$. Due to the simple structure
of the network, it is easy to see that $F$ is a capacity factor of
$\mathcal{N}$ and all capacity factors besides $\{e_0 \}$ can be
written in the form of $F$. Therefore, the total number of capacity factor
of $\mathcal{N}$ is $3^n+1$, which grows exponentially with
$|V|+|E|$.

\section{Some Properties of Capacity Factor}

The following lemma is very useful, which guarantees the existence
of some $k$-CF containing a specific edge $e$ when some conditions are satisfied.

\begin{lemma}
\label{pros:factor_exist}
Let $\mathcal{N}=(V,E,s,t)$ be a point-to-point network.
For an edge $e\in E$ and an integer $1\leq k \leq C_{\mathcal{N}}(s,t)$, if
there exists a edge set $F' \subseteq E$ containing $e$ such that $C_{\mathcal{N}\setminus F^{'}}(s,t)=f-k,\ C_{\mathcal{N}\setminus (F^{'}\setminus e)}(s,t)>f-k$, then there exists a $k$-CF containing $e$.
\end{lemma}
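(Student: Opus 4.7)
The plan is to obtain the required $k$-CF by greedily shrinking $F'$ while preserving both membership of $e$ and the deletion-capacity condition. Writing $f = C_{\mathcal{N}}(s,t)$, initialize $F := F'$; then $e \in F$ and $C_{\mathcal{N} \setminus F}(s,t) \le f-k$. Iterate the following rule: if there is some edge $e' \in F$ with $e' \ne e$ such that $C_{\mathcal{N} \setminus (F \setminus \{e'\})}(s,t) \le f-k$, replace $F$ by $F \setminus \{e'\}$; otherwise, stop. The process terminates since $|F|$ strictly decreases at each step, producing a final set $F$ that still contains $e$ and still satisfies the capacity condition.

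The key verification is that $e$ itself is never a legitimate candidate for removal at any intermediate stage, i.e.\ that $C_{\mathcal{N} \setminus (F \setminus \{e\})}(s,t) > f-k$ holds throughout. I would prove this by monotonicity of max-flow under edge deletion: since the algorithm only removes edges other than $e$, we always maintain $F \subseteq F'$, and hence $F \setminus \{e\} \subseteq F' \setminus \{e\}$. Deleting the smaller set cuts no more $s$-$t$ paths, so $C_{\mathcal{N} \setminus (F \setminus \{e\})}(s,t) \ge C_{\mathcal{N} \setminus (F' \setminus \{e\})}(s,t) > f-k$, where the last inequality is precisely the lemma's hypothesis on $F'$. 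Consequently, the greedy step is always forced to pick an $e' \ne e$, and $e$ survives to termination.

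To finish, I would check that the terminal $F$ satisfies both defining conditions of a $k$-CF. Condition~(1) is preserved as a loop invariant. For condition~(2), take any $F'' \subsetneq F$ and pick some $e'' \in F \setminus F''$; then $F'' \subseteq F \setminus \{e''\}$. By the termination criterion (when $e''\ne e$) or by the invariant just proved (when $e''=e$), we have $C_{\mathcal{N} \setminus (F \setminus \{e''\})}(s,t) > f-k$, and monotonicity then yields $C_{\mathcal{N} \setminus F''}(s,t) \ge C_{\mathcal{N} \setminus (F \setminus \{e''\})}(s,t) > f-k$, as required.

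The whole argument rests on the elementary monotonicity $A \subseteq B \Rightarrow C_{\mathcal{N} \setminus B}(s,t) \le C_{\mathcal{N} \setminus A}(s,t)$, which is immediate from Max-flow Min-cut. I do not anticipate any real obstacle; the only subtle point is that the hypothesis $C_{\mathcal{N}\setminus (F'\setminus e)}(s,t)>f-k$ is used precisely to anchor the monotonicity comparison that protects $e$ from being stripped off during the reduction, and this is exactly why the condition is imposed at the level of $F'$ rather than at the level of an arbitrary intermediate $F$.
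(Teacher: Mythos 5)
Your proposal is correct and follows essentially the same route as the paper: both shrink to a minimal edge set containing $e$ whose deletion drops the max-flow by $k$, using monotonicity of max-flow under edge deletion to protect $e$ and to verify the minimality condition (your greedy peeling is just a constructive phrasing of the paper's choice of a minimum-cardinality member of the family $\mathcal{F}_e^k$). Your explicit check of condition (2) for arbitrary proper subsets via monotonicity is a welcome detail the paper leaves implicit, but it is not a different argument.
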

\begin{proof}
Let $\mathcal{F}_{e}^{k}=\{ F'\subseteq E \mid e\in F',\ C_{\mathcal{N}\setminus F'}(s,t)=f-k,\ C_{\mathcal{N}\setminus (F'\setminus e)}(s,t)>f-k \}$.
By condition, $\mathcal{F}_{e}^{k}$ is not empty. Hence we can find a $F\in \mathcal{F}_{e}^{k}$ with minimal cardinality. In fact, $F$ is what we want. Firstly, we have $C_{\mathcal{N}\setminus F}(s,t)=f-k$. Secondly, $\forall\ e' \in F$, if $e'=e$ then we already have $C_{\mathcal{N}\setminus (F^\setminus e)}(s,t)>f-k$ otherwise if $C_{\mathcal{N}\setminus (F\setminus e')}(s,t)=f-k$ then $F\setminus e'$ still belongs to $\mathcal{F}_{e}^{k}$ which contradicts with that $F$ has the minimal cardinality. So $C_{\mathcal{N}\setminus (F\setminus e')}(s,t)>f-k$ for all $e' \in F$, which implies $F$ is a $k$-CF.
\end{proof}

The following proposition only holds for acyclic network, and therefore
all properties depending on it only holds for acyclic network.

\begin{proposition}
\label{pros:decompose} Let $\mathcal{N}=(V,E,s,t)$ be an acyclic network. If $\mathcal{N}$ can be decomposed into
$C_{\mathcal{N}}(s,t)$ edge-disjoint paths, then for any $e \in E$,
we have $C_{\mathcal{N} \backslash e}(s,t) =
C_{\mathcal{N}}(s,t)-1$.
\end{proposition}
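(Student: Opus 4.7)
The plan is proof by contradiction: since deleting a single unit-capacity edge can reduce the max flow by at most one, it suffices to rule out the case $C_{\mathcal{N}\backslash e}(s,t) = f$, where $f := C_{\mathcal{N}}(s,t)$. Assuming this equality, I will exhibit a directed cycle of $\mathcal{N}$ passing through $e$ and appeal to acyclicity.

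The first step is to reinterpret the hypothesis: a decomposition of $E$ into $f$ edge-disjoint $s$-$t$ paths is exactly a unit-valued max $s$-$t$ flow $\phi$ on $\mathcal{N}$ with $\phi(e') = 1$ on \emph{every} edge $e' \in E$. The second step invokes the contradiction hypothesis: some max flow of $\mathcal{N}\backslash e$ has value $f$; extending it by $0$ across the removed edge yields an $s$-$t$ flow $\phi'$ on $\mathcal{N}$ of value $f$ with $\phi'(e) = 0$ and $\phi'(e') \in \{0,1\}$ elsewhere.

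Next I would consider $g := \phi - \phi'$. Since $\phi$ and $\phi'$ share a common value, $g$ is a circulation on $\mathcal{N}$ (zero net divergence at every vertex, including $s$ and $t$). Since $\phi \equiv 1$ on $E$ and $\phi'$ takes values in $\{0,1\}$, the circulation $g$ is non-negative, and $g(e) = 1$. Starting from $e = \langle u, v \rangle$, flow conservation at $v$ forces some out-edge of $v$ to carry positive $g$-flow; iterating, we trace out a walk in the support of $g$ which, in a finite graph, must eventually revisit a vertex, producing a directed cycle of $\mathcal{N}$ --- contradicting that $\mathcal{N}$ is acyclic.

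The only real subtlety, and the place where the hypothesis is used essentially, is ensuring that $g$ is \emph{non-negative}, so that the support-tracing argument yields an honest directed cycle of $\mathcal{N}$ rather than an alternating walk traversing some edges in reverse. Non-negativity hinges precisely on the assumption that the path decomposition saturates every edge of $E$; without saturation, $\phi$ could vanish on some edge, $g$ could take the value $-1$ there, and the standard flow-decomposition of $g$ would produce Eulerian structures in an auxiliary bidirected graph from which acyclicity of $\mathcal{N}$ draws no contradiction. This is exactly why the proposition is restricted to decomposable acyclic networks.
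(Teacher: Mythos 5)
Your proof is correct, and it takes a genuinely different route from the paper's. The paper argues directly: it takes the $m-1$ paths of the decomposition that avoid $e$ as a feasible flow in $\mathcal{N}\backslash e$, and then shows this flow is maximum by proving the residual network has no augmenting path, using a topological labeling of the acyclic network --- since every edge off the path through $e$ is saturated, the only forward residual edges are the two remnants of that path, while every reverse residual edge strictly decreases the label, so no residual $s$-$t$ path can cross from the segment before $e$ to the segment after it. You instead argue by contradiction at the level of flow values: the full decomposition is the all-ones flow $\phi$, a hypothetical value-$f$ flow $\phi'$ avoiding $e$ gives a difference $g=\phi-\phi'$ that is a non-negative circulation with $g(e)=1$, and tracing its support yields a directed cycle, contradicting acyclicity. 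Both arguments hinge on exactly the same two hypotheses (full saturation of $E$ by the decomposition, and acyclicity), and you correctly flag that non-negativity of $g$ is where saturation enters. What each buys: your circulation argument is shorter and avoids the residual-network and augmenting-path machinery, and it makes transparent the equivalent statement that every maximum flow must saturate every edge; the paper's argument is constructive in that it exhibits an explicit maximum flow of $\mathcal{N}\backslash e$ (the surviving $m-1$ paths) together with a certificate of its optimality, which fits the way the proposition is later invoked in the sufficiency direction of Theorem \ref{thm:main}.
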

\begin{proof} Let $\mathcal{N'}=\mathcal{N} \backslash e$,
$m=C_{\mathcal{N}}(s,t)$. Denote the $m$ edge-disjoint paths by
$p_1$, $p_2$, $\dots$, $p_m$. Since $\mathcal{N}$ can be decomposed into $C_{\mathcal{N}}(s,t)$ disjoint paths, $e$ must be on
one of the path.
Without loss of generality, we assume
$e$ is on $p_m$ and $p_m=(u_1, u_2, \ldots, u_k, u_{k+1} \ldots,
u_l)$, where $\langle u_k, u_{k+1} \rangle = e$ and $1 \leq k \leq l
- 1$.

Clearly, there is a feasible flow $f$ on $\mathcal{N'}$, which is
consisting of $m-1$ edge-disjoint paths $p_1, p_2, \ldots, p_{m-1}$
and hence $C_{\mathcal{N'}}(s,t) \geq m-1$. Recalling \emph{Max-flow
min-cut theorem}, we know that a flow $f$ is a maximal flow if and
only if there is no augmenting paths on the residual network. So it
is sufficient to show the residual network $\mathcal{N'}_f$ has no
augmenting path.

Since $\mathcal{N}$ is acyclic, we assign each vertex an integer label by topological order, such
that $\langle u, v \rangle \in E$ implies $L(u) < L(v)$.

According to the edge direction in network $\mathcal{N}$, all the
edges in the residual network $\mathcal{N'}$ can partitioned into
two parts, forward edges and reversal edges. Consider all the
forward edges in $\mathcal{N'}_f$, which can be viewed as the union
of two paths, $(s = u_1, \ldots, u_k) \cup (u_{k+1}, \ldots, u_l = t)$. By the label properties of vertices, for a
forward edge $(u, v)$ and a reversal edge $(u', v')$, we have
$L(u) < L(v)$ and $L(u') > L(v')$ respectively.
Since $u_1 < u_2 < \dots <u_l $, there is no reversal edge with head
in $\{u_{k+1}, \ldots, u_l\}$ and tail in $\{u_1, \ldots, u_{k}\}$.
Therefore, $u_1$($=s$) and $u_l$($=t$) are disconnected and there is
no augmenting path in the residual network $\mathcal{N'}_f$. This
completes our proof.
\end{proof}

The following result shows that any $k$-CF is contained in some $(k+1)$-CF, assuming the
maximal flow of network is greater than $k$ of course.

\begin{proposition}
\label{pros:cf_contain}
Let $\mathcal{N}=(V,E,s,t)$ be an acyclic network, and assume $F$ is a $k$-CF of $\mathcal{N}$, where $k<C_{\mathcal{N}}(s,t)$, there exists a $(k+1)$-CF $F^{'}$ such that $F\subseteq F^{'}$.
\end{proposition}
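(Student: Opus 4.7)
The plan is to construct the desired $(k+1)$-CF by augmenting $F$ with a single carefully chosen edge $e'$. Write $f = C_{\mathcal{N}}(s,t)$. First I would observe that since $F$ is a $k$-CF, the equality $C_{\mathcal{N}\setminus F}(s,t) = f-k$ actually holds (not just the inequality $\le f-k$): for any $e\in F$ the minimality clause gives $C_{\mathcal{N}\setminus (F\setminus\{e\})}(s,t)\ge f-k+1$, while re-adding a single edge raises the max flow by at most $1$, so $C_{\mathcal{N}\setminus F}(s,t)\ge f-k$. The sharper fact that $C_{\mathcal{N}\setminus(F\setminus\{e\})}(s,t)=f-k+1$ for every $e\in F$ will be used later.

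Next, since $k < f$, the residual network $\mathcal{N}\setminus F$ has max flow $f-k\ge 1$, hence a nonempty minimum $s$-$t$ cut. I would pick any edge $e'\in E\setminus F$ that lies in some minimum cut of $\mathcal{N}\setminus F$; equivalently, any $e'\in E\setminus F$ such that deleting it from $\mathcal{N}\setminus F$ drops the max flow by exactly $1$. Set $F' = F\cup\{e'\}$, so that $C_{\mathcal{N}\setminus F'}(s,t)=f-k-1$. This verifies the first $(k+1)$-CF condition.

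For the minimality clause, I would check each edge $e\in F'$. When $e=e'$, the set $F'\setminus\{e'\}=F$ satisfies $C_{\mathcal{N}\setminus F}(s,t)=f-k > f-(k+1)$. When $e\in F$, I combine two facts: $C_{\mathcal{N}\setminus(F\setminus\{e\})}(s,t) = f-k+1$ by the observation above, and removing the single additional edge $e'$ cuts the flow by at most $1$, giving $C_{\mathcal{N}\setminus(F'\setminus\{e\})}(s,t)\ge f-k > f-(k+1)$. Both conditions of a $(k+1)$-CF are therefore satisfied and $F\subseteq F'$ by construction.

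The main obstacle I anticipate is the minimality check for edges $e\in F$ in $F'$: the natural worry is that enlarging $F$ by $e'$ could destroy the criticality of some old edge. Resolving this is where the identity $C_{\mathcal{N}\setminus(F\setminus\{e\})}(s,t)=f-k+1$ carries the weight, providing a one-unit buffer that absorbs the loss from deleting $e'$. A second subtlety worth noting is that acyclicity is not used anywhere in this construction, so the statement in fact holds in greater generality; I would still phrase the proof under the hypothesis stated in the proposition to remain consistent with the paper.
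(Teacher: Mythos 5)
Your proof is correct and takes essentially the same route as the paper's: augment $F$ by a single edge lying in a minimum cut of $\mathcal{N}\setminus F$ (the paper's ``cut-edge'') and verify minimality using the fact that deleting one edge lowers the max flow by at most one. You merely make explicit what the paper leaves implicit, namely the existence of such an edge when $k<C_{\mathcal{N}}(s,t)$ and the observation that acyclicity is never actually used.
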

\begin{proof} Let $f=C_{\mathcal{N}}(s,t)$. Assume $e$ is a cut-edge of the network $\mathcal{N}\setminus F$. Let $F^{'}=F\cup \{e \}$, then $F^{'}$ is a $(k+1)$-CF of $\mathcal{N}$. Because $C_{\mathcal{N}\setminus F^{'}}=f-(k+1)$ and $\forall\ \tilde{F}\subsetneq F^{'}$, $C_{\mathcal{N} \setminus \tilde{F}}(s,t)>f-k$ if $e\notin \tilde{F}$ otherwise $C_{\mathcal{N}\setminus \tilde{F}}(s,t) \geq f-k$ for deleting one edge at most diminishes one flow.
\end{proof}

The following theorem  characterizes an edge which is contained
in some $k$-CF of a point-to-point acyclic network.

\begin{theorem}
\label{thm:main}
Let $\mathcal{N}=(V,E,s,t)$ be an acyclic point-to-point network, and integer $1 \le k \le C_{\mathcal{N}}(s,t)$.
For any edge $e\in E$, there is a $k$-CF $F$ containing $e$ if and only if there exists an $s$-$t$ path $p$ containing $e$ such that $C_{\mathcal{N}\setminus p}(s,t)\geq C_{\mathcal{N}}(s,t)-k$.
\end{theorem}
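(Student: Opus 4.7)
The plan is to prove both implications. Write $f = C_{\mathcal{N}}(s,t)$ throughout.

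\textbf{Forward direction.} Suppose $F$ is a $k$-CF containing $e$. From the defining inequalities of a $k$-CF, together with the fact that adding or removing one edge changes the max-flow by at most $1$, I would first deduce the tight equalities $C_{\mathcal{N}\setminus F}(s,t)=f-k$ and $C_{\mathcal{N}\setminus(F\setminus\{e\})}(s,t)=f-k+1$. By the Integrality Theorem, a maximum flow of $\mathcal{N}\setminus(F\setminus\{e\})$ decomposes into $f-k+1$ edge-disjoint $s$-$t$ paths. If none of these paths used $e$, all $f-k+1$ of them would survive in $\mathcal{N}\setminus F$, contradicting that its max-flow is only $f-k$; so some path $p$ in the decomposition must contain $e$. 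The remaining $f-k$ paths are edge-disjoint from $p$, hence they yield $f-k$ edge-disjoint $s$-$t$ paths in $\mathcal{N}\setminus p$, giving $C_{\mathcal{N}\setminus p}(s,t)\ge f-k$.

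\textbf{Converse direction.} Suppose $p$ is an $s$-$t$ path through $e$ with $C_{\mathcal{N}\setminus p}(s,t)\ge f-k$. I would pick $f-k$ edge-disjoint $s$-$t$ paths $p_1,\dots,p_{f-k}$ inside $\mathcal{N}\setminus p$ and let $A = E(p)\cup E(p_1)\cup\cdots\cup E(p_{f-k})$. The subnetwork $\mathcal{N}_A=(V,A,s,t)$ is acyclic (inherited from $\mathcal{N}$) and by construction decomposes into exactly $f-k+1$ edge-disjoint $s$-$t$ paths, so Proposition~\ref{pros:decompose} applies. Setting $F'=(E\setminus A)\cup\{e\}$, one has $\mathcal{N}\setminus(F'\setminus\{e\})=\mathcal{N}_A$ with max-flow exactly $f-k+1$, and $\mathcal{N}\setminus F'=\mathcal{N}_A\setminus\{e\}$ with max-flow exactly $f-k$ by Proposition~\ref{pros:decompose}. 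Thus $F'$ meets the hypothesis of Lemma~\ref{pros:factor_exist}, which produces the desired $k$-CF containing $e$.

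The main obstacle is the converse. The hypothesis only supplies an inequality $C_{\mathcal{N}\setminus p}(s,t)\ge f-k$, so one cannot just take $F'$ to be the edges of $p$: the max-flow after removing $p$ may exceed $f-k$, and even when equal, $e$ may fail to be critical in $\mathcal{N}\setminus(p\setminus\{e\})$. The device of restricting the network to the subnetwork $\mathcal{N}_A$ spanned by a carefully chosen bundle of $f-k+1$ edge-disjoint paths including $p$ resolves both issues simultaneously, because Proposition~\ref{pros:decompose} forces every edge of $\mathcal{N}_A$—and in particular $e$—to be critical. This is also precisely the step where the acyclicity hypothesis is essential, since Proposition~\ref{pros:decompose} can fail on cyclic networks.
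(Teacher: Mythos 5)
Your proof is correct and follows essentially the same route as the paper: the forward direction decomposes a maximum flow of $\mathcal{N}\setminus(F\setminus\{e\})$ into $f-k+1$ edge-disjoint paths and takes the one through $e$, and the converse forms the subnetwork spanned by $p$ together with $f-k$ edge-disjoint paths of $\mathcal{N}\setminus p$ and applies Proposition~\ref{pros:decompose} and Lemma~\ref{pros:factor_exist} to $(E\setminus A)\cup\{e\}$, which is exactly the paper's construction $F\cup\{e\}$ with $F=E\setminus A$. The only (minor) difference is that you handle the case $C_{\mathcal{N}\setminus p}(s,t)>C_{\mathcal{N}}(s,t)-k$ by selecting just $f-k$ of the available paths, whereas the paper reduces to the equality case by invoking Proposition~\ref{pros:cf_contain}; your shortcut is valid and slightly cleaner.
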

\begin{proof}
Let $f=C_{\mathcal{N}}(s,t)$. $``\Rightarrow$'': $e \in F$, $F$ is a $k$-CF. Since $C_{\mathcal{N}\setminus F}(s,t)=f-k$, while adding $e$ to $\mathcal{N}\setminus F$ the maximum flow increase by $1$, we claim there is a path $p$ containing $e$ such that $C_{(\mathcal{N}\setminus F)\setminus p}(s,t)=f-k$ which implies $C_{\mathcal{N}\setminus p}(s,t)\geq f-k$.

$``\Leftarrow$'': By Proposition \ref{pros:cf_contain}, assume $C_{\mathcal{N}\setminus p}(s,t)= f-k$. Since $C_{\mathcal{N}\setminus p}(s,t)= f-k$, there is a feasible flow on $\mathcal{N}\setminus p$, which can be decomposed into $f-k$ paths $p_1,\ p_2\, \ldots,\ p_{f-k}$. Denote by $F=E\setminus (\bigcup\limits_{i=1}^{f-k} p_i \cup p)$, by Proposition \ref{pros:decompose}, $E\setminus (F\cup {e})=(E\setminus F)\setminus e=\bigcup\limits_{i=1}^{f-k}p_i\cup (p\setminus e)$ implies $C_{\mathcal{N}\setminus (F\cup {e})}(s,t)=C_{\bigcup\limits_{i=1}^{f-k}p_i\cup (p\setminus e)}(s,t)=f-k$. And $C_{\mathcal{N}\setminus F}(s,t)=C_{\bigcup\limits_{i=1}^{f-k} p_i \cup p}(s,t)=f-k+1>f-k$. Apply Lemma \ref{pros:factor_exist}, we know there is always a $k$-CF $F$ containing $e$.
\end{proof}

Recall that an edge of a network is either in the
$D$-set is defined as the union of all CFs and $H$-set consists of all the remainings. Taking $k=1$ in the of preceding theorem, we obtain the following
result, which gives an equivalent condition to characterize edges in
the $D$-set and $H$-set.

\begin{corollary}
\label{cor:main} Let $\mathcal{N}=(V,E,s,t)$ be a point-to-point acyclic
network. For any $e \in E$ on some $s$-$t$ path, $e$ is in the $H$-set if and
only if for any $s$-$t$ path containing $e$, $C_{\mathcal{N}
\backslash P}(s,t) \leq C_{\mathcal{N}}(s,t)-2$; and $e$ is in the $D$-set
if and only if there exists some $s$-$t$ path $p$ containing $e$ such that $C_{\mathcal{N}
\backslash P}(s,t) = C_{\mathcal{N}}(s,t)-1$.
\end{corollary}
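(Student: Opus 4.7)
The plan is to invoke Theorem \ref{thm:main} with $k=1$ and combine it with a short observation about how deleting an $s$-$t$ path affects the maximum flow. Since a $1$-CF is exactly a capacity factor in the sense of Definition 2.1, and the $D$-set is by definition the union of all CFs, the theorem immediately tells us that an edge $e$ on some $s$-$t$ path lies in some CF if and only if there exists an $s$-$t$ path $p$ through $e$ with $C_{\mathcal{N}\setminus p}(s,t) \geq C_{\mathcal{N}}(s,t)-1$.

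To convert the inequality $\geq$ into an equality, I would then establish the complementary upper bound: for every $s$-$t$ path $P$,
\[
C_{\mathcal{N}\setminus P}(s,t) \leq C_{\mathcal{N}}(s,t)-1.
\]
This is the one non-trivial ingredient beyond Theorem \ref{thm:main}. The argument is short: pick any minimum $s$-$t$ cut $[V_1,\overline{V_1}]$ of $\mathcal{N}$; because $P$ starts at $s\in V_1$ and ends at $t\in\overline{V_1}$, its edges cross the cut a positive (in fact odd) number of times in the forward direction, so $P$ uses at least one edge of $[V_1,\overline{V_1}]$. Hence $|[V_1,\overline{V_1}]\setminus P| \leq C_{\mathcal{N}}(s,t)-1$, and the min cut of $\mathcal{N}\setminus P$ inherits this bound.

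Combining the two facts, $e\in D$-set if and only if there exists an $s$-$t$ path $p$ through $e$ with $C_{\mathcal{N}\setminus p}(s,t)=C_{\mathcal{N}}(s,t)-1$, which is exactly the second half of the corollary. For the $H$-set half I would just take contrapositives: $e\in H$-set means $e$ lies in no CF, i.e., no path $p$ through $e$ achieves $C_{\mathcal{N}\setminus p}(s,t)\geq C_{\mathcal{N}}(s,t)-1$; using the upper bound just proved, the negation is precisely $C_{\mathcal{N}\setminus P}(s,t)\leq C_{\mathcal{N}}(s,t)-2$ for every such $P$.

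The only step that needs any care is the upper bound $C_{\mathcal{N}\setminus P}(s,t)\leq C_{\mathcal{N}}(s,t)-1$; one has to avoid the slip of imagining that $P$ might contribute zero edges to some $s$-$t$ cut (it does not, since any $s$-$t$ walk must cross any $s/t$ partition). Everything else is a direct specialisation of Theorem \ref{thm:main} with $k=1$.
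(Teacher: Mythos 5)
Your proposal is correct and follows essentially the same route as the paper, which simply states the corollary as the specialization $k=1$ of Theorem \ref{thm:main}. The only thing you add is the (true and easily verified) observation that deleting any $s$-$t$ path removes at least one edge from every minimum cut, so $C_{\mathcal{N}\setminus P}(s,t)\leq C_{\mathcal{N}}(s,t)-1$ always holds, which justifies replacing the theorem's ``$\geq$'' by the corollary's ``$=$'' --- a step the paper leaves implicit.
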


\begin{figure}[h]
\begin{center}
\begin{tikzpicture}
[inner sep=1mm,
cir/.style={circle,draw,thick},
pre/.style={->,>=stealth',semithick}]
\node (v1) at (0,0) [cir,label=left:$s$]{};
\node (v2) at (2,1) [cir,label=above:$v_1$]{};
\node (v4) at (4,1) [cir,label=above:$v_3$]{};
\node (v3) at (2,-1) [cir,label=below:$v_2$]{};
\node (v5) at (4,-1) [cir,label=below:$v_4$]{};
\node (v6) at (6,0) [cir,label=right:$t$]{};
\draw [pre] (v1) to node[above]{$e_1$} (v2);
\draw [pre] (v2) to node[above]{$e_3$} (v4);
\draw [pre] (v4) to node[above]{$e_6$} (v6);
\draw [pre] (v4) to node[above]{$e_4$} (v3);
\draw [pre] (v1) to node[below]{$e_2$} (v3);
\draw [pre] (v3) to node[below]{$e_5$} (v5);
\draw [pre] (v5) to node[below]{$e_7$} (v6);
\end{tikzpicture}
\caption{Example for D-set and H-set}
\label{fig:posmain}
\end{center}
\end{figure}
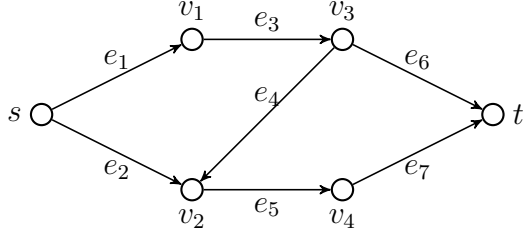

Consider the point-to-point acyclic network $\mathcal{N} = (V,E,s =
v_1,t = v_6)$ shown in Figure \ref{fig:posmain}. There are three
different $s$-$t$ path in total, which are $p_1 = (s, v_1, v_3,
t)$, $p_2 = (s, v_2, v_4, t)$ and $p_3 = (s, v_1, v_3, v_2,
v_4, t)$. It's clear that $C_{\mathcal{N} \backslash p_1}(s,t) =
C_{\mathcal{N} \backslash p_2}(s,t) = 1$, but $C_{\mathcal{N}
\backslash p_3}(s,t) = 0$. Thus $e_4 = \langle v_3, v_2 \rangle$ is
the only edge satisfying the conditions in Corollary \ref{cor:main}.
Therefore $H = \{ e_4 \}$ and $D = E \backslash
H = \{ e_1, e_2, e_3, e_5, e_6, e_7 \}$, which coincides
with the direct computation that $D =
\bigcup_{\textrm{CF } F}{F}= \{ e_1 \} \cup \{ e_2 \}
\cup \{ e_2 \} \cup \{ e_3 \} \cup \{ e_5 \} \cup \{ e_6 \} \cup \{
e_7 \} = \{ e_1, e_2, e_3, e_5, e_6, e_7 \}.$

\vspace{0.3cm}
Even though network $\mathcal{N}$ is assumed to be acyclic in traditional
network coding, we still want to know whether this
characterization holds in a cyclic network. In fact, if an edge
is in $D$-set, then there exists a path $p$ containing it
and satisfying $C_{\mathcal{N}\backslash p}(s, t) =
C_{\mathcal{N}}(s, t)-1.$ It's easy to check that proof procedure of
necessity in Theorem \ref{thm:main} also holds for cyclic network. But the sufficiency proof does not work.
Because Proposition \ref{pros:decompose} does not hold for cyclic
network. The following is a counterexample.

\begin{figure}[h]
\begin{center}
\begin{tikzpicture}
[inner sep=1mm,
cir/.style={circle,draw,thick},
pre/.style={->,>=stealth',semithick}]
\node (v1) at (0,0) [cir,label=left:$s$]{};
\node (v2) at (2,1) [cir,label=above:$v_1$]{};
\node (v3) at (2,-1) [cir,label=below:$v_2$]{};
\node (v6) at (4,0) [cir,label=right:$t$]{};
\draw [pre] (v1) to  (v2);
\draw [pre] (v1) to  (v3);
\draw [pre] (v2) to [bend left=30]   (v3);
\draw [pre] (v3) to [bend left=30]   (v2);
\draw [pre] (v2) to  (v6);
\draw [pre] (v3) to  (v6);
\end{tikzpicture}
%\begin{tikzpicture}
%[inner sep=1mm,
%cir/.style={circle,draw,thick},
%pre/.style={->,>=stealth',semithick}]
%\node (s) at (0,0) [cir,label=above:$s$]{};
%\node (v1) at (0,-1) [cir,label=right:$v_1$]{};
%\node (v2) at (0,-3) [cir,label=right:$v_2$]{};
%\node (t) at (0,-4) [cir,label=below:$t$]{};
%\draw [pre] (s) to (v1);
%\draw [pre] (v1) to (v2);
%\draw [pre] (v2) to (t);

%\draw [pre,very thick] (s) to [bend right=75] (v2);
%\draw [pre] (s) to [bend left=75] (t);
%\draw [pre,very thick] (v2) to [bend left=50] (v1);
%\draw [pre,very thick] (v1) to [bend left=50] (t);
%\end{tikzpicture}
\caption{A counterexample for cyclic network}
\label{fig:counterexample}
\end{center}
\end{figure}
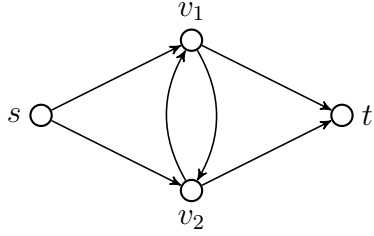

Consider a cyclic network $\mathcal{N} = (V, E, s, t)$ in Figure \ref{fig:counterexample}.
Clearly, $H=\{\langle v_1, v_2 \rangle, \langle v_2, v_1\rangle \}$. However, path $p = (s, v_1, v_2, t)$
covers $\langle v_1, v_2 \rangle$ and satisfying $C_{\mathcal{N} \backslash p} =
C_{\mathcal{N}}(s, t) - 1 = 1$, and so dos path $p = (s, v_2, v_1, t)$.

The following result shows that any $k$-CF can be decomposed into
an $m$-CF and $(k-m)$-CF corresponding to different networks.

\vspace{0.3cm}
\begin{proposition} Let $\mathcal{N}=(V,E,s,t)$ be a point-to-point network.
If $F$ is a $k$-CF of $\mathcal{N}$, then $\forall \ 1\leq m \leq k-1$, $\exists F^{'}\subseteq F$ such that $F^{'}$ is a $m$-CF of $\mathcal{N}$ and $F\setminus F^{'}$ is a $(k-m)$-CF of $\mathcal{N}\setminus F^{'}$.
\end{proposition}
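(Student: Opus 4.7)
The plan is to choose $F'$ as a cardinality-minimum subset of $F$ whose removal already drops the flow by at least $m$. Setting $f=C_{\mathcal{N}}(s,t)$, I would first define
\[
\mathcal{S}=\{F''\subseteq F : C_{\mathcal{N}\setminus F''}(s,t)\le f-m\}
\]
and observe that $\mathcal{S}$ is nonempty because $F\in\mathcal{S}$: the two $k$-CF conditions together with the fact that deleting a single edge can lower the flow by at most $1$ pin $C_{\mathcal{N}\setminus F}(s,t)$ to exactly $f-k$, which is $\le f-m$. Let $F'\in\mathcal{S}$ be of minimum cardinality.

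The first claim is that $F'$ is an $m$-CF of $\mathcal{N}$. Condition~(1) is immediate from $F'\in\mathcal{S}$. For condition~(2), any $F''\subsetneq F'$ lies outside $\mathcal{S}$ by the minimality of $|F'|$, so $C_{\mathcal{N}\setminus F''}(s,t)>f-m$; applying this to $F''=F'\setminus\{e\}$ for some $e\in F'$ and then adding $e$ back (at most a unit drop) also forces $C_{\mathcal{N}\setminus F'}(s,t)=f-m$ exactly, which matches the $m$-CF definition.

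The second claim is that $F\setminus F'$ is a $(k-m)$-CF of the point-to-point network $\mathcal{M}:=\mathcal{N}\setminus F'$, whose capacity is $f-m$ by the previous step. The identity $\mathcal{M}\setminus G=\mathcal{N}\setminus(F'\cup G)$ does all the work: with $G=F\setminus F'$ the deleted set is exactly $F$, giving residual flow $f-k=(f-m)-(k-m)$; and for any proper subset $G\subsetneq F\setminus F'$ the union $F'\cup G$ is a proper subset of $F$, so the $k$-CF property of $F$ forces $C_{\mathcal{N}\setminus(F'\cup G)}(s,t)>f-k$, which is precisely the minimality clause of the $(k-m)$-CF definition for $\mathcal{M}$.

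The only genuine obstacle is to certify that $F\setminus F'$ is nonempty, since a CF must be nonempty by definition; this is exactly where the hypothesis $m\le k-1$ is used. For any $e\in F$, the $k$-CF conditions together with the one-edge-equals-at-most-one-unit bound pin $C_{\mathcal{N}\setminus(F\setminus\{e\})}(s,t)$ to $f-k+1$, which is $\le f-m$ under $m\le k-1$. Hence $F\setminus\{e\}\in\mathcal{S}$ has strictly smaller cardinality than $F$, so the cardinality-minimum $F'$ satisfies $F'\ne F$ and $F\setminus F'\neq\emptyset$, completing the argument.
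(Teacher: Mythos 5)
Your proof is correct and takes essentially the same route as the paper's: an extremal (minimum-cardinality) choice of $F'\subseteq F$ forcing the flow down by $m$, followed by the identity $(\mathcal{N}\setminus F')\setminus(F\setminus F')=\mathcal{N}\setminus F$ combined with the $k$-CF minimality of $F$ for the second half. The only differences are cosmetic: you define the auxiliary family with $C_{\mathcal{N}\setminus F''}(s,t)\le f-m$ rather than $=f-m$ (making nonemptiness trivial via $F$ itself instead of the paper's induction), and you explicitly verify $F\setminus F'\neq\emptyset$, a point the paper leaves implicit.
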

\begin{proof} Let  $f=\mathcal{C}_{\mathcal{N}}(s,t)$.
Denote $\mathcal{F}_{m}= \{ F^{'}\subseteq F\ |\ \mathcal{C}_{\mathcal{N}\setminus F^{'}}(s,t)= f-m \}, m=0,1,\ldots,k$. Since deleting one edge will cause the maximum flow decrease by at most $1$, we can show $\mathcal{F}_{m}\neq\emptyset$ for $m=1,2,\ldots,k-1$ by induction.

%First, for $m=1$, if $\mathcal{F}_{1}=\emptyset \ and\ \mathcal{F}_{0}=\emptyset$ then we have %$\mathcal{C}_{\mathcal{N}\setminus F^{'}}(s,t) \leq f-2,\ \forall F^{'}\subseteq F$. It's impossible since %$\forall \{ e \}\subseteq F,\ \mathcal{C}_{\mathcal{N}\setminus\{ e \}}(s,t)\geq f-1$. If %$\mathcal{F}_{1}=\emptyset \ and\ \mathcal{F}_{0}\neq\emptyset$, then we can find $\tilde{F}\in %\mathcal{F}_{0}\ s.t\ |\tilde{F}|=max\{ |F^{'}|\ \mid\ F^{'}\in\mathcal{F}_{0}\}$. $\forall e \in %F\setminus\tilde{F}\ (It's\ clear\ that\ F\setminus\tilde{F}\neq\emptyset)$ we may get %$\mathcal{C}_{\mathcal{N}\setminus(\tilde{F}\cup\{ e \})}(s,t)\leq f-2$ which contradicts that %$\mathcal{C}_{\mathcal{N}\setminus(\tilde{F}\cup\{ e \})}(s,t)\geq f-1$ for $\tilde{F}\in \mathcal{F}_{0}$. %Now we can say $\mathcal{F}_{1}\neq\emptyset$.

%Second, if it's true for $l<m<k-1\ that\ \mathcal{F}_{l}\neq\emptyset$. We assume $\mathcal{F}_{m}=\emptyset$. %Similarly, we obtain the $\tilde{F}\in \mathcal{F}_{m-1}\ s.t\ |\tilde{F}|=max\{ |F'|\ \mid\ %F'\in\mathcal{F}_{m-1}\}$. $\forall e \in F\setminus\tilde{F}\ (It's\ clear\ that\ %F\setminus\tilde{F}\neq\emptyset)$ we may get $\mathcal{C}_{\mathcal{N}\setminus(\tilde{F}\cup\{ e %\})}(s,t)\leq f-m-1$ which contradicts that $\mathcal{C}_{\mathcal{N}\setminus(\tilde{F}\cup\{ e \})}(s,t)\geq %f-m$ for $\tilde{F}\in \mathcal{F}_{m-1}$. Thus we get $\mathcal{F}_{m}\neq\emptyset$.

For $\forall 1\leq m\leq k-1$ we claim that $\tilde{F}\in \mathcal{F}_{m}$ with  $|\tilde{F}|=\min\{ |F'|\ \mid\ F'\in\mathcal{F}_{m}\}$ is a $m$-CF of $\mathcal{N}$. We only need to show $\forall F'\subsetneq\tilde{F},\ \mathcal{C}_{\mathcal{N}\setminus F'} \geq f-m+1$, which is true
by the minimality of $F'$.
% It's obvious since $|F'|<|\tilde{F}|\ so\ F'\notin \mathcal{F}_{m}$.

Finally, we should prove if $F'$ is an $m$-CF of $\mathcal{N}$ and $F'\subseteq F$ then $F\setminus F'$ is a $(k-m)$-CF of $\mathcal{N}\setminus F'$. Denote $F''=F\setminus F'$. Notice $(E\setminus F')\setminus F^{''}=E\setminus (F'\cup F^{''})=E\setminus F$, we have $C_{(\mathcal{N}\setminus F')\setminus F^{''}}(s,t)=C_{\mathcal{N}\setminus F}=f-k=(f-m)-(k-m)$. Since $C_{\mathcal{N}\setminus F'}(s,t)=f-m$, we only need $\forall\tilde{F}\subsetneq F',\ C_{(\mathcal{N}\setminus F')\setminus\tilde{F}}(s,t)>f-k$. Otherwise, if $\exists \tilde{F}\subsetneq F'$ such that $C_{(\mathcal{N}\setminus F')\setminus\tilde{F}}(s,t)=f-k$, we may have $ F'\cup\tilde{F}\subsetneq F$ and $C_{\mathcal{N}\setminus(F'\cup\tilde{F})}(s,t)=f-k$ which is impossible.
\end{proof}

The following corollary is a direct generalization of the preceding result, which says a $k$-CF
can be decomposed arbitrary.

\begin{corollary} Let $\mathcal{N}=(V,E,s,t)$ be a point-to-point network, and $F$ is a $k$-CF of $\mathcal{N}$. For any $n\in \mathbb{Z}^{+}$ and $k_{i}\in \mathbb{Z}^{+}\ (i=1,2,\ldots,n)$ such that $\sum_{i=1}^{n}k_{i}=k$, then there exists pairwise disjoint sets $F_{i}\subseteq F$ such that $ \displaystyle\bigcup_{i=1}^{n}F_{i}=F$ and $F_{i}$ is a $k_{i}$-CF of $\mathcal{N}\setminus(\bigcup\limits_{j=1}^{i-1}F_{j})$.
\end{corollary}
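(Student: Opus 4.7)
The plan is to prove this by induction on $n$, the number of parts into which we decompose $k$. The base case $n=1$ is vacuous since we simply take $F_1=F$, which is by assumption a $k_1$-CF of $\mathcal{N}$ (with $k_1=k$), and there is no constraint to check.

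For the inductive step, suppose the statement holds for decompositions into $n-1$ parts. Given positive integers $k_1,\ldots,k_n$ summing to $k$ and a $k$-CF $F$ of $\mathcal{N}$, I would first invoke the preceding proposition with the single split $m=k_1$ (note $1\le k_1\le k-1$ since $n\ge 2$ and all $k_i\ge 1$). This yields a subset $F_1\subseteq F$ such that $F_1$ is a $k_1$-CF of $\mathcal{N}$ and $F\setminus F_1$ is a $(k-k_1)$-CF of the reduced network $\mathcal{N}':=\mathcal{N}\setminus F_1$.

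Next, I would apply the induction hypothesis to the network $\mathcal{N}'$, the $(k-k_1)$-CF $F\setminus F_1$, and the remaining $n-1$ positive integers $k_2,\ldots,k_n$ (which sum to $k-k_1$). This produces pairwise disjoint subsets $F_2,\ldots,F_n\subseteq F\setminus F_1$ whose union is $F\setminus F_1$, such that each $F_i$ is a $k_i$-CF of $\mathcal{N}'\setminus(\bigcup_{j=2}^{i-1}F_j)=\mathcal{N}\setminus(\bigcup_{j=1}^{i-1}F_j)$, which is exactly the required property. Because $F_1$ is disjoint from $F\setminus F_1\supseteq F_2\cup\cdots\cup F_n$, the collection $F_1,\ldots,F_n$ is pairwise disjoint and unions to $F$.

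There is no substantial obstacle here since essentially all the work was already done in the preceding proposition: the only subtlety is matching the network at each inductive step so that the composition $\mathcal{N}\setminus(\bigcup_{j<i}F_j)$ obtained after peeling $F_1$ and then recursing inside $\mathcal{N}\setminus F_1$ coincides with the one demanded by the statement, which is an immediate set-theoretic identity.
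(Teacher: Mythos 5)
Your proof is correct and matches the paper's intent: the paper offers no explicit proof, presenting the corollary as a ``direct generalization'' obtained by iterating the preceding proposition, which is precisely your induction on $n$. The one point worth being explicit about --- that $k_1\le k-1$ when $n\ge 2$, so the proposition's hypothesis $1\le m\le k-1$ applies --- you have already noted, so nothing is missing.
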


It's natural to ask whether the converse is true, i.e., if there exists $k_{i}\in \mathbb{Z}^{+}\ (i=1,2,\ldots,n)$ and pairwise disjoint sets $F_{i}\subseteq E$ such that $F_{i}$ is a $k_{i}$-CF of $\mathcal{N}\setminus(\bigcup\limits_{j=1}^{i-1}F_{j})$, whether $\bigcup\limits_{i=1}^{n}F_{j}$ is a $(\sum\limits_{i=1}^{n}k_{i})$-CF of $\mathcal{N}$. However, the following example shows
that it is not true.

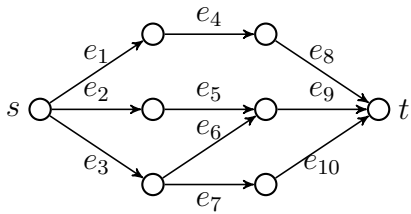
\begin{figure}[h]
\begin{center}
\begin{tikzpicture}
[inner sep=1mm,
cir/.style={circle,draw,thick},
pre/.style={->,>=stealth',semithick},
post/.style={<-,>=stealth',semithick}]
\node (s) at (-2,0) [cir,label=left:$s$]{};
\node (v1) at (-0.5,1) [cir]{};
\node (v2) at (-0.5,0) [cir]{};
\node (v3) at (-0.5,-1) [cir]{};
\node (v4) at (1,1) [cir]{};
\node (v5) at (1,0) [cir]{};
\node (v6) at (1,-1) [cir]{};
\node (t) at (2.5,0) [cir,label=right:$t$]{};
\draw [pre] (s) to node [above]{$e_1$} (v1);
\draw [pre] (s) to node [above]{$e_2$} (v2);
\draw [pre] (s) to node [below]{$e_3$} (v3);
\draw [pre] (v1) to node [above]{$e_4$} (v4);
\draw [pre] (v2) to node [above]{$e_5$} (v5);
\draw [pre] (v3) to node [above]{$e_6$} (v5);
\draw [pre] (v3) to node [below]{$e_7$} (v6);
\draw [pre] (v4) to node [above]{$e_8$} (t);
\draw [pre] (v5) to node [above]{$e_9$} (t);
\draw [pre] (v6) to node [below]{$e_{10}$} (t);
\end{tikzpicture}

%\begin{tikzpicture}
%[inner sep=1mm,
%cir/.style={circle,draw,thick},
%pre/.style={->,>=stealth',semithick},
%post/.style={<-,>=stealth',semithick}]
%\node (s) at (0,2) [cir,label=above:$s$]{};
%\node (v1) at (-1,0.5) [cir]{};
%\node (v2) at (0,0.5) [cir]{};
%\node (v3) at (1,0.5) [cir]{};
%\node (v4) at (-1,-1) [cir]{};
%\node (v5) at (0,-1) [cir]{};
%\node (v6) at (1,-1) [cir]{};
%\node (t) at (0,-2.5) [cir,label=below:$t$]{};
%\draw [pre] (s) to node [above left]{$e_1$} (v1);
%\draw [pre] (s) to node [left]{$e_2$} (v2);
%\draw [pre] (s) to node [above right]{$e_3$} (v3);
%\draw [pre] (v1) to node [left]{$e_4$} (v4);
%\draw [pre] (v2) to node [left]{$e_5$} (v5);
%\draw [pre] (v3) to node [above]{$e_6$} (v5);
%\draw [pre] (v3) to node [right]{$e_7$} (v6);
%\draw [pre] (v4) to node [below left]{$e_8$} (t);
%\draw [pre] (v5) to node [left]{$e_9$} (t);
%\draw [pre] (v6) to node [below right]{$e_{10}$} (t);
%\end{tikzpicture}
\caption{The inverse of decomposition is not true}
\label{fig:inv_decc}
\end{center}
\end{figure}

See Figure \ref{fig:inv_decc}. Denote by $F_1=\{ e_4,\ e_5\}$ and $F_2=\{ e_7,\ e_9\}$. $F_1$ is a $2$-CF of $\mathcal{N}$ and $F_2$ is a $1$-CF of $\mathcal{N}\setminus F_1$. However $F_1\cup F_2$ is not a $3$-CF of $\mathcal{N}$ for $C_{\mathcal{N}\setminus \{e_4,e_7,e_9\}}(s,t)=0$.

\vspace{0.3cm}
Out next theorem asserts that all edges are contained in some $2$-CF if the maximum flow
is at least $2$.

\begin{theorem} Let $\mathcal{N}=(V,E,s,t)$ be an acyclic point-to-point network, and $C_{\mathcal{N}}(s,t) > 1$. Then all edges on some
$s$-$t$ path are contained in some $2$-CF.
\end{theorem}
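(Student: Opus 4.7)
The plan is to invoke Theorem \ref{thm:main} with $k=2$: it suffices to exhibit, for each edge $e = \langle u, v\rangle$ on some $s$-$t$ path, an $s$-$t$ path $p$ through $e$ with $C_{\mathcal{N}\setminus p}(s,t) \geq f-2$, where $f = C_{\mathcal{N}}(s,t) \geq 2$. Equivalently, I would produce an $s$-$u$ path $\alpha$, a $v$-$t$ path $\beta$ (both inside $\mathcal{N}\setminus e$), and $f-2$ additional edge-disjoint $s$-$t$ paths in $\mathcal{N}\setminus e$, all pairwise edge-disjoint; then $p := \alpha \cdot e \cdot \beta$ is an $s$-$t$ path through $e$ and the $f-2$ extra paths survive intact in $\mathcal{N}\setminus p$.

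To obtain all of these paths simultaneously I would reduce to a single max-flow computation on the auxiliary network $\mathcal{N}^*$ built from $\mathcal{N}\setminus e$ by adjoining a super-source $s^*$ with arcs $s^* \to s$ of capacity $f-1$ and $s^* \to v$ of capacity $1$, and a super-sink $t^*$ with arcs $u \to t^*$ of capacity $1$ and $t \to t^*$ of capacity $f-1$. The total capacity leaving $s^*$ is $f$, so any $s^*$-$t^*$ flow has value at most $f$. I plan to prove, by a case analysis on which side of a putative minimum $s^*$-$t^*$ cut each of $s, t, u, v$ lies, that the cut value is always at least $f$; the ingredients are (i) $C_{\mathcal{N}}(s,t) = f$; (ii) the edge $e$ contributes to the corresponding $s$-$t$ cut in $\mathcal{N}$ exactly when $u$ is on the $s^*$-side and $v$ is on the $t^*$-side; and (iii) since $\mathcal{N}$ is acyclic with $L(u) < L(v)$, any $s$-$u$ and $v$-$t$ subpath of an $s$-$t$ path through $e$ already lies in $\mathcal{N}\setminus e$, so the reachability needed to rule out small cuts is present.

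Since max flow equals min cut, the max flow in $\mathcal{N}^*$ is exactly $f$. An integral path decomposition partitions it into paths of three possible shapes (the fourth shape $s^* \to v \to \cdots \to u \to t^*$ is forbidden by acyclicity): $s^* \to s \to \cdots \to t \to t^*$ with count $a$, $s^* \to s \to \cdots \to u \to t^*$ with count $b$, and $s^* \to v \to \cdots \to t \to t^*$ with count $c$. The capacity constraints $a+b \leq f-1$, $a+c \leq f-1$, $b \leq 1$, $c \leq 1$ together with $a+b+c = f$ force $a = f-2$ and $b = c = 1$. Taking $\alpha$ and $\beta$ to be the corresponding type-$b$ and type-$c$ subpaths inside $\mathcal{N}\setminus e$, the topological indices of their internal vertices satisfy $L(\cdot) \leq L(u) < L(v) \leq L(\cdot)$ respectively, so $\alpha$ and $\beta$ are vertex-disjoint and $p := \alpha \cdot e \cdot \beta$ is a genuine $s$-$t$ path through $e$; the $f-2$ type-$a$ paths are edge-disjoint from $p$, yielding $C_{\mathcal{N}\setminus p}(s,t) \geq f-2$, and Theorem \ref{thm:main} then delivers a $2$-CF containing $e$.

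The main obstacle I anticipate is the min-cut case analysis on $\mathcal{N}^*$. The awkward cases are those in which $s$ or $t$ lies on the ``wrong'' side of the cut, or only one of $u, v$ is separated from $s$: for these one cannot invoke the global $s$-$t$ max-flow bound on $\mathcal{N}$ directly and must instead lower-bound the internal cut contribution in $\mathcal{N}\setminus e$ using the existence of an $s$-$u$ or $v$-$t$ path in $\mathcal{N}\setminus e$ guaranteed by the hypothesis on $e$ together with the acyclic topological order.
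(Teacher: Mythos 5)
Your proposal is correct, but it takes a genuinely different route from the paper. The paper's own proof, after the same reduction you make (exhibit an $s$-$t$ path $p$ through $e$ with $C_{\mathcal{N}\setminus p}(s,t)\ge C_{\mathcal{N}}(s,t)-2$ and invoke Theorem \ref{thm:main} with $k=2$), proceeds by an informal case analysis on how paths from $s$ to $e$ and from $e$ to $t$ meet the edges of a fixed maximum flow configuration: if $e$ lies on some maximum flow it is done; if flow edges are met on only one side it ``changes'' the flow to one through $e$; and if both sides meet the same flow it simply asserts a path whose deletion lowers the flow by $2$ --- the last two steps are left essentially unargued. Your reduction to a single max-flow/min-cut computation on the auxiliary network $\mathcal{N}^*$ (capacities $f-1$ on $s^*\to s$ and $t\to t^*$, capacity $1$ on $s^*\to v$ and $u\to t^*$) replaces this with a rigorous and constructive argument, and the case analysis you defer does close: writing $A\ni s^*$, $B\ni t^*$ for a cut, if $s\in A$ and $t\in B$ the internal edges form an $s$-$t$ cut of $\mathcal{N}$ of size at least $f$, losing at most one unit (the edge $e$) precisely when $u\in A, v\in B$, in which case the two unit arcs repay it; if exactly one of $s,t$ lies on the wrong side, the corresponding $(f-1)$-capacity arc is cut and the $s$-$u$ prefix or $v$-$t$ suffix of the assumed $s$-$t$ path through $e$ (which lies in $\mathcal{N}\setminus e$) forces at least one further crossing; and if both lie on the wrong side the two heavy arcs already give $2(f-1)\ge f$, which is exactly where the hypothesis $C_{\mathcal{N}}(s,t)>1$ enters. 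Your remaining steps --- the forced counts $a=f-2$, $b=c=1$, the exclusion of $s^*\to v\to\cdots\to u\to t^*$ paths by acyclicity, and the vertex-disjointness of $\alpha$ and $\beta$ via the topological labels so that $p=\alpha\cdot e\cdot\beta$ is a simple path containing $e$ --- are all sound. What your approach buys is a complete proof together with a polynomial-time construction of the witness path (one max-flow computation); what the paper's argument buys is brevity, at the price of leaving its second and third cases without real justification.
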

\begin{proof}
Our goal is to find a path $p$ contains $e$ such that $C_{\mathcal{N}\setminus p}(s,t)\geq C_{\mathcal{N}}(s,t)-2$, then we can apply Theorem \ref{thm:main}.

If $e$ is contained in some maximum flow, i.e., there is a path $p$ containing $e$ such that $C_{\mathcal{N} \setminus p}(s,t) = C_{\mathcal{N}}(s,t)-1$, then we are done by Theorem \ref{thm:main}. If there is a path from $e$ to $t$ (or from $s$ to $e$) which meets an edge contained in a maximum flow, while any path from $s$ to $e$ (or from $e$ to t) doesn't meet any edge in flows contained under that maximum flow configuration, we can change the maximum flow to another maximum flow which contains $e$.

If there is a path from $e$ to $t$ witch meets an edge contained in a maximum flow and there is a path from $s$ to $e$ meets an edge contained in the same maximum flow, then we can find a path containing $e$ deleting which will cause the maximum flow decrease by 2. Otherwise a path from $s$ to $t$ containing $e$ won't meet an edge contained in any maximum flow, it contradicts with maximum flow.
\end{proof}

Up to now, the capacity factors we considered are restricted to point-to-point scenario. However,
 it is well known that for point to point networks, network coding provides no benefits, while major benefit of network is for multicast networks. The The following result is a simple relationship between the CF in
 a point-to-point network and multicast network.

\begin{proposition}
Let $\mathcal{N}=(V,E,s,T)$, where $T = \{t_1, t_2, \ldots, t_m\}$,
$\mathcal{N}_i=(V, E, s, t_i)$.
 $F$ is a $\overrightarrow{k}$-CF of $\mathcal{N}$, where $\overrightarrow{k}=(k_1,\ldots, k_m)$, if there exists $i \in \{1, 2, \ldots, m\}$
 such that
\begin{itemize}
\item[(1)] $F$ is a $k_i$-CF of $\mathcal{N}_i$.
\item[(2)] $C_{\mathcal{N}\backslash F}(s, t_j) = C_{\mathcal{N}}(s, t_j)-k_j$ for all $j \in \{1,2,\ldots, m\}.$
\end{itemize}
\end{proposition}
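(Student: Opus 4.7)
The plan is to verify directly the two conditions in the definition of $\overrightarrow{k}$-CF for the multicast network $\mathcal{N}$. Condition (1) of the definition asks that $C_{\mathcal{N}\backslash F}(s,t_j) \le C_{\mathcal{N}}(s,t_j)-k_j$ for every $j$, and this is immediate from hypothesis~(2), which gives equality and hence the required inequality. So the entire content of the proof is verifying the minimality condition~(2) of the $\overrightarrow{k}$-CF definition: for every proper subset $F' \subsetneq F$, at least one sink index $j$ must satisfy $C_{\mathcal{N}\backslash F'}(s,t_j) > C_{\mathcal{N}}(s,t_j)-k_j$.

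The key observation driving the argument is that $\mathcal{N}_i$ and $\mathcal{N}$ share the same underlying directed graph $(V,E)$; they differ only in which sinks are designated. Consequently, for any edge subset $F' \subseteq E$ we have the equalities
\begin{equation*}
C_{\mathcal{N}_i}(s,t_i) = C_{\mathcal{N}}(s,t_i)
\quad\text{and}\quad
C_{\mathcal{N}_i \backslash F'}(s,t_i) = C_{\mathcal{N}\backslash F'}(s,t_i),
\end{equation*}
since the single-commodity max-flow value from $s$ to $t_i$ depends only on the graph structure, not on what other vertices are labeled as sinks. Using this identification, the witness index for condition~(2) can be taken to be the distinguished $i$ supplied by the hypothesis.

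Concretely, I would proceed as follows. Fix an arbitrary $F' \subsetneq F$. By hypothesis~(1), $F$ is a $k_i$-CF of the point-to-point network $\mathcal{N}_i$, so the minimality clause in Definition of $k$-CF (applied to $\mathcal{N}_i$) yields $C_{\mathcal{N}_i \backslash F'}(s,t_i) > C_{\mathcal{N}_i}(s,t_i) - k_i$. Translating through the equalities above, this becomes $C_{\mathcal{N}\backslash F'}(s,t_i) > C_{\mathcal{N}}(s,t_i) - k_i$, which is exactly the witness needed for condition~(2) of the $\overrightarrow{k}$-CF definition, with the same index $i$ working uniformly for every $F' \subsetneq F$. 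Combined with the check of condition~(1) from the previous paragraph, this shows $F$ is a $\overrightarrow{k}$-CF of $\mathcal{N}$.

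There is no genuine obstacle: the argument is essentially bookkeeping once one notices that min-cuts from $s$ to a single sink are insensitive to the presence of the other sinks. The only place to be careful is to not conflate condition~(2) of the hypothesis (which asserts exact equalities $C_{\mathcal{N}\backslash F}(s,t_j) = C_{\mathcal{N}}(s,t_j)-k_j$ for \emph{all} $j$, i.e., $F$ achieves the full capacity drop at every sink) with the minimality half of the definition (which only asks that \emph{some} sink recover above the threshold when an edge is removed). These play complementary roles and together nail down both halves of the $\overrightarrow{k}$-CF requirement.
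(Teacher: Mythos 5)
Your proposal is correct and matches the paper's argument: hypothesis (2) supplies the required capacity drop at every sink, and hypothesis (1), via the minimality clause of the $k_i$-CF definition for $\mathcal{N}_i$ (whose cut values coincide with those of $\mathcal{N}$ for the sink $t_i$), gives the witness index $i$ uniformly for every proper subset $F'\subsetneq F$. The paper's proof is just a terser statement of exactly this reasoning.
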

\begin{proof} If $F$ satisfies condition (1) and (2), then $F$ is a CF because for any proper subset $F'$ of $F$,
$C_{\mathcal{N}\backslash F'}(s, t_i) > C_{\mathcal{N}}(s, t_i)-k_i$ for $F$ is a $k_i$-CF of $\mathcal{N}_i$.
\end{proof}

A moment thought reveals that the converse is not true.

\section{Algorithm to Compute $D$-set and $H$-set}

From Corollary \ref{cor:main}, an edge is in $D$-set if and only if it
is contained in some maximum flow configuration. The following
algorithm gives a method to solve the problem.

\begin{algorithm}
\label{alg:main} The input is a point-to-point network
$\mathcal{N}=(V,E,s,t)$. The output is the $D$-set and
$H$-set of the network.
\begin{enumerate}
\item \emph{[Initialization]} $D=\O$, $H=\O$.
\item \emph{[Maximum flow]} Find a maximum flow $f$ on
$\mathcal{N}$ and obtain the corresponding residual network
$\mathcal{N}_f$. \item \emph{[Choose an edge]} If there is an edge
$\langle u, v \rangle \in E$ and $\langle u, v \rangle \not \in
D \cup H$, then choose $\langle u, v \rangle$
and go to step 4, else go to step 6. \item \emph{[$\langle u, v
\rangle$ is in $f$?]} If $\langle u, v \rangle$ is in $f$, then
$D$ $\leftarrow$$D$ $\cup$ $\{\langle u, v
\rangle\}$ and go to step 3, else go to step 5. \item \emph{[A
circle containing $\langle u, v \rangle$?]} Since $\langle u, v
\rangle$ is not in $f$, $\langle u, v \rangle$ is a forward edge in
the residual network $\mathcal{N}_f$. If there is a path from $v$ to
$u$ in $\mathcal{N}_f$, then $ D \leftarrow D
\cup \{ \langle u, v \rangle \}$, else $H \leftarrow
H \cup \{ \langle u, v \rangle \}$. Go to step 3. \item
\emph{[End]} $D$ and $H$ are the
$D$-set and $H$-set of network $\mathcal{N}$
respectively.
\end{enumerate}
\end{algorithm}

\begin{figure}[h]
\begin{center}
\begin{tikzpicture}
[inner sep=1mm,
cir/.style={circle,draw,thick},
pre/.style={->,>=stealth',semithick},
post/.style={dotted,->,>=stealth',semithick}]
\node (v1) at (0,0) [cir,label=above:$v_1$]{};
\node (v2) at (-1.5,-1) [cir,label=left:$v_2$]{};
\node (v3) at (-0.5,-1) [cir,label=left:$v_3$]{};
\node (v4) at (0.5,-1) [cir,label=right:$v_4$]{};
\node (v5) at (1.5,-1) [cir,label=right:$v_5$]{};
\node (v6) at (-1.5,-2) [cir,label=left:$v_6$]{};
\node (v7) at (-0.5,-2) [cir,label=left:$v_7$]{};
\node (v8) at (0.5,-2) [cir,label=right:$v_8$]{};
\node (v9) at (1.5,-2) [cir,label=right:$v_9$]{};
\node (v10) at (0,-3) [cir,label=below:$v_{10}$]{};
\draw [pre] (v1) to [bend right=10] (v2);
\draw [pre] (v1) to (v3);
\draw [pre] (v1) to (v4);
\draw [pre] (v1) to [bend left=10](v5);
\draw [pre] (v2) to (v6);
\draw [pre] (v3) to (v7);
\draw [pre] (v4) to (v8);
\draw [pre] (v5) to (v9);
\draw [pre] (v6) to [bend right=10](v10);
\draw [pre] (v7) to (v10);
\draw [pre] (v8) to (v10);
\draw [pre] (v9) to [bend left=10](v10);
\draw [pre] (v7) to (v2);
\draw [pre] (v8) to (v3);
\draw [pre] (v9) to (v4);

\node (v1) at (0+4.5,0) [cir,label=above:$v_1$]{};
\node (v2) at (-1.5+4.5,-1) [cir,label=left:$v_2$]{};
\node (v3) at (-0.5+4.5,-1) [cir,label=left:$v_3$]{};
\node (v4) at (0.5+4.5,-1) [cir,label=right:$v_4$]{};
\node (v5) at (1.5+4.5,-1) [cir,label=right:$v_5$]{};
\node (v6) at (-1.5+4.5,-2) [cir,label=left:$v_6$]{};
\node (v7) at (-0.5+4.5,-2) [cir,label=left:$v_7$]{};
\node (v8) at (0.5+4.5,-2) [cir,label=right:$v_8$]{};
\node (v9) at (1.5+4.5,-2) [cir,label=right:$v_9$]{};
\node (v10) at (0+4.5,-3) [cir,label=below:$v_{10}$]{};
\draw [pre,very thick] (v1) to [bend right=10] (v2);
\draw [pre,very thick] (v1) to (v3);
\draw [pre,very thick] (v1) to (v4);
\draw [pre] (v1) to [bend left=10](v5);
\draw [pre,very thick] (v2) to (v6);
\draw [pre,very thick] (v3) to (v7);
\draw [pre,very thick] (v4) to (v8);
\draw [pre] (v5) to (v9);
\draw [pre,very thick] (v6) to [bend right=10](v10);
\draw [pre,very thick] (v7) to (v10);
\draw [pre,very thick] (v8) to (v10);
\draw [pre] (v9) to [bend left=10](v10);
\draw [pre] (v7) to (v2);
\draw [pre] (v8) to (v3);
\draw [pre] (v9) to (v4);

\node (v1) at (0+9,0) [cir,label=above:$v_1$]{};
\node (v2) at (-1.5+9,-1) [cir,label=left:$v_2$]{};
\node (v3) at (-0.5+9,-1) [cir,label=left:$v_3$]{};
\node (v4) at (0.5+9,-1) [cir,label=right:$v_4$]{};
\node (v5) at (1.5+9,-1) [cir,label=right:$v_5$]{};
\node (v6) at (-1.5+9,-2) [cir,label=left:$v_6$]{};
\node (v7) at (-0.5+9,-2) [cir,label=left:$v_7$]{};
\node (v8) at (0.5+9,-2) [cir,label=right:$v_8$]{};
\node (v9) at (1.5+9,-2) [cir,label=right:$v_9$]{};
\node (v10) at (0+9,-3) [cir,label=below:$v_{10}$]{};
\draw [post] (v1) to [bend right=10] (v2);
\draw [post] (v1) to (v3);
\draw [post] (v1) to (v4);
\draw [pre] (v1) to [bend left=10](v5);
\draw [post] (v2) to (v6);
\draw [post] (v3) to (v7);
\draw [post] (v4) to (v8);
\draw [pre] (v5) to (v9);
\draw [post] (v6) to [bend right=10](v10);
\draw [post] (v7) to (v10);
\draw [post] (v8) to (v10);
\draw [pre] (v9) to [bend left=10](v10);
\draw [pre] (v7) to (v2);
\draw [pre] (v8) to (v3);
\draw [pre] (v9) to (v4);
\end{tikzpicture}
\caption{Network, maximum flow, and residual network}
\label{fig:3flow}
\end{center}
\end{figure}
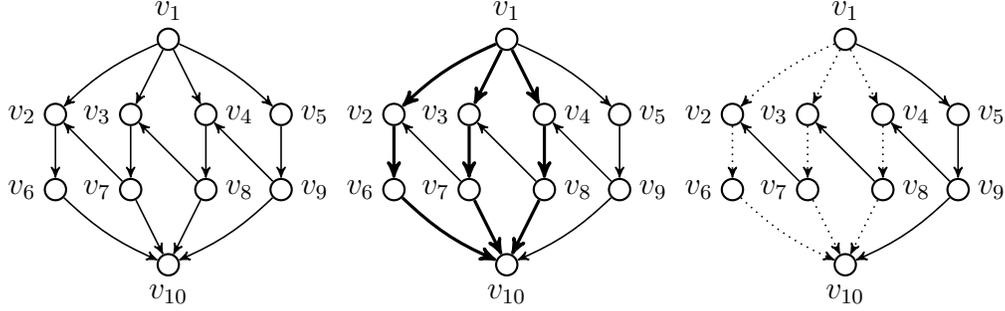

Consider the network $\mathcal{N} = (V,E,s,t)$ shown in
Figure \ref{fig:3flow}, the maximum flow value is $3$. One
maximum flow scheme $f$ and the corresponding residual network
$\mathcal{N}_f$ are also shown in Figure \ref{fig:3flow}.

Since edges (in bold) in the maximum flow $f$ is in the
$D$-set, $\langle v_1, v_2 \rangle$, $\langle v_1, v_3
\rangle$, $\langle v_1, v_4 \rangle$, $\langle v_2, v_6 \rangle$,
$\langle v_3, v_7 \rangle$, $\langle v_4, v_8 \rangle$, $\langle
v_6, v_{10} \rangle$, $\langle v_7, v_{10} \rangle$, $\langle v_8,
v_{10} \rangle$ $\in D$. Let's consider the remaining
edges. For $\langle v_1, v_5 \rangle$, in the residual network
$\mathcal{N}_f$, there is no cycle containing it. So, $\langle v_1,
v_5 \rangle \in H$. Similar, there is no cycle in
$\mathcal{N}_f$ containing $\langle v_9, v_5 \rangle$, $\langle v_8,
v_3 \rangle$, $\langle v_7, v_2 \rangle$, which implies $\langle
v_9, v_5 \rangle$, $\langle v_8, v_3 \rangle$, $\langle v_7, v_2
\rangle$ $\in H$. For $\langle v_4, v_9 \rangle$ and
$\langle v_9, v_{10} \rangle$, there is a cycle $(v_4, v_9, v_{10},
v_8)$ containing them. So $\langle v_4, v_9 \rangle$, $\langle v_9,
v_10 \rangle$ $\in D$.

To sum up, the $D$-set and $H$-set of network
$\mathcal{N}(V,E,s,t)$ is
\begin{displaymath}
\begin{split}
D =& \{\langle v_1, v_2 \rangle, \langle v_1, v_3 \rangle,
\langle v_1, v_4 \rangle, \langle v_2, v_6 \rangle, \langle v_3, v_7
\rangle, \langle v_4, v_8 \rangle,\\& \quad\quad\quad \langle v_6,
v_{10} \rangle, \langle v_7, v_{10} \rangle, \langle v_8, v_{10}
\rangle, \langle v_4, v_9 \rangle, \langle v_9, v_{10} \rangle\}
\end{split}
\end{displaymath}
and
\begin{displaymath}
H = \{\langle v_1, v_5 \rangle, \langle v_9, v_5 \rangle,
\langle v_8, v_3 \rangle, \langle v_7, v_2 \rangle \}.
\end{displaymath}

The next theorem shows the correctness of algorithm \ref{alg:main},
which reduces the existence of a maximum flow containing $e$ to
the existence of a cycle containing $e$ in the residual network
corresponding to an arbitrary maximum flow.

\begin{theorem}
Let $\mathcal{N}=(V,E,s,t)$ be a point-to-point network and $f$ be a
maximum flow on $\mathcal{N}$. The corresponding residual network is
$\mathcal{N}_f$. For an edge $e \in E$, $e$ is in a maximum flow on
$\mathcal{N}$ for some $f'$, if and only if $e$ is in $f$ or there
is a cycle in $\mathcal{N}_f$ containing $e$.
\end{theorem}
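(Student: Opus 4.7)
The plan is to prove the two directions separately, using the fact that the residual network encodes exactly the ``slack'' available for rerouting flow.

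For the ``if'' direction, the case $e \in f$ is immediate, so suppose $e = \langle u,v \rangle \notin f$ and there is a cycle $C$ in $\mathcal{N}_f$ containing $e$. Since $e \notin f$, $e$ appears in $\mathcal{N}_f$ as a forward edge. The rest of $C$ goes from $v$ back to $u$ and consists of forward edges (edges not in $f$) and reverse edges (edges currently carrying flow in $f$). I would augment $f$ along $C$: increase $f$ by one unit on every forward edge of $C$ and decrease it by one unit on every reverse edge. Because $C$ is a cycle, this modification preserves flow conservation at every vertex and does not alter the net outflow at $s$, so the resulting $f'$ is a feasible flow with $|f'|=|f|$. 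Since $e$ was a forward edge on $C$, we have $e \in f'$, and $f'$ is the desired maximum flow containing $e$.

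For the ``only if'' direction, suppose $f'$ is a maximum flow with $e \in f'$ and $e \notin f$. Regard $f$ and $f'$ as $0/1$ valued indicator functions on $E$ and consider the signed difference $g = f' - f$. At every vertex, including $s$ and $t$, the net value of $g$ vanishes (both flows have value $|f|$), so $g$ is a circulation. I would now build an auxiliary directed multigraph $D$ whose arcs are: a forward copy of $e'$ for every $e' \in f' \setminus f$, and a reverse copy $\langle \text{head}(e'), \text{tail}(e') \rangle$ for every $e' \in f \setminus f'$. The circulation property of $g$ translates exactly into $\deg^+(v) = \deg^-(v)$ for every $v$ in $D$, so $D$ admits a decomposition into edge-disjoint directed cycles (a standard Eulerian-style argument).

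Finally, since $e \in f' \setminus f$, the forward copy of $e$ lies in $D$ and hence belongs to some cycle $C$ of the decomposition. By construction every arc of $D$ is an arc of $\mathcal{N}_f$ (forward arcs come from edges not saturated by $f$, reverse arcs come from edges carrying flow in $f$), so $C$ is a cycle in $\mathcal{N}_f$ containing $e$, completing the proof. The main obstacle is the bookkeeping in the ``only if'' direction: one must correctly identify that the signed difference of two equal-value flows is a circulation and then invoke the cycle decomposition, while being careful about parallel edges and the forward/reverse distinction in $\mathcal{N}_f$; the ``if'' direction is easier but still needs a clean verification that cycle augmentation preserves flow value.
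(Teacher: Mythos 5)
Your proof is correct and follows essentially the same route as the paper: the ``if'' direction augments along the residual cycle to shift the flow onto $e$ without changing its value, and the ``only if'' direction takes the difference $f'-f$, notes it is a circulation, decomposes it into cycles, and observes that these cycles (forward arcs from $f'\setminus f$, reverse arcs from $f\setminus f'$) live in $\mathcal{N}_f$. Your write-up is in fact somewhat more careful than the paper's in making the cycle decomposition and the forward/reverse bookkeeping explicit.
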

\begin{proof} \textbf{Necessity:} When $e$ is in $f$, it's obvious.
Assume $e$ is in a cycle $C$ and $C$ is in $\mathcal{N}_f$. Adding a
cyclic flow $C$ in $\mathcal{N}_f$ can generate another flow $f$,
which is also a maximum flow. Because $e$ is a forward edge in
$\mathcal{N}_f$, $e$ becomes a reversal edge in $\mathcal{N}_{f'}$,
which means there is a maximum flow containing $e$.

\textbf{Sufficiency:} Suppose $e$ is in a maximum flow $f'$. If $f =
f'$, then $e$ is in $f$. If not, subtract flow $f$ from $f'$,
denoted by $f-f'$. Since subtracting does not break the conservation
constraints, $f-f'$ is a feasible flow. Since $|f| = |f'|$, the flow
value of $f-f'$ is $0$. Therefore, $f-f'$ can be decomposed into one
or more cycles. Because $e$ is in $f'$ and not in $f$, $e$ is one
cycle of the flow $f-f'$, denoted by $C$. It it easy to see cycle
$C$ is in the residual network $\mathcal{N}_f$.
\end{proof}

How to find a cycle in $\mathcal{N}_f$ containing an
edge $\langle u, v \rangle$? Note that all edges of network
$\mathcal{N}$ considering in this paper have unit capacity, and
$\langle u, v \rangle$ is a forward edge in $\mathcal{N}_f$. It's
easy to see that there is a cycle containing $\langle u, v \rangle$
if and only if there is path from $v$ to $u$. Therefore, the problem
searching cycles in $\mathcal{N}_f$ is reduced to the connectivity
problem of two vertices in a digraph.

Now we analyze the time complexity of this algorithm. This algorithm
could be divided into two separated parts, maximum flow and
searching cycles. If we find a maximum flow by using
relabel-to-front algorithm \cite{GT86}, whose running time is
$O({|V|}^3)$, and using Floyd-Warshall algorithm \cite{CLRS01} to
implement searching cycles for all edges, whose running time is also
$O({|V|}^3)$, then the total running time is
$O({|V|}^3)+O({|V|}^3)=O({|V|}^3)$. If we have a maximum flow $f$
preserved and just wondering the belonging of one edge, the time
complexity is $O(|E|)$, where determining the connectivity of two
vertices just needs a depth-first-search or breadth-first-search
through all vertices and edges.

\section{Computational Hardness Relating to Capacity Factors}

We consider the Maximum Capacity Factor (MCF) problem for a
point-to-point network which might have cycles. Given a
point-to-point network $\mathcal{N} = (V,E,s,t)$ and a specific
number $k$, our goal is to answer whether there is a capacity factor
with size not less than $k$. The formal language for this decision
problem is: MCF$= \{ \langle \mathcal{N}, k \rangle :$ $\mathcal{N} = \langle V, E, s, t
\rangle$ is a network with some capacity factor with size greater
than or equal to $k \}$.

In the following proof of our theorem, we will reduce a known
NP-complete problem to MCF, which is NAESAT (stands for
``not-all-equal''). NAESAT is an variant of SAT. In NAESAT, we are
given a set of clauses with three literals, and we insist that in no
clause are all three literals equal in truth value, i.e., neither
all true, nor all false. It is known that NAESAT is NP-complete
\cite{Sis06}.

Before proving MCF is NP-complete, we present a definition and a
proposition which will be used in the following proof.

\begin{definition}\cite{KW07}
Let $\mathcal{N} = (V,E,s,t)$ be a point-to-point network. An
$s$-$t$ cut $[V_1, \overline{V_1}]$ of $\mathcal{N}$ is a partially
connected $s$-$t$ cut if for any $e=\langle u, v \rangle \in [V_1,
\overline{V_1}]$, there is a path from $s$ to $u$ in
$\mathcal{N}(V_1)$ and there is a path form $v$ to $t$ in
$\mathcal{N}(\overline{V_1})$, where $\mathcal{N}(V_1)$ and
$\mathcal{N}(\overline{V_1})$ are the vertex-induced network of
$\mathcal{N}$ by vertex sets $V_1$ and $\overline{V_1}$
respectively.
\end{definition}

It's worth noting that a partially connected $s$-$t$ cut $[V_1,
\overline{V_1}]$ does not necessarily mean for every vertex $u \in
V_1$, there is a path from $s$ to $u$ in $\mathcal{N}(V_1)$, and for
every vertex $v \in \overline{V_1}$, there is a path from $v$ to $t$
in $\mathcal{N}(\overline{V_1})$. In other words, $\mathcal{N}(V_1)$
and $\mathcal{N}(\overline{V_1})$ might not be connected graphs.

In \cite{KW07}, it's proved that the size of a capacity factor of
network $\mathcal{N} = (V,E,s,t)$ is upper-bounded by the size of
the maximal partially-connected $s$-$t$ cut minus
$C_{\mathcal{N}}(s,t) - 1$. In unit capacity network, there is a
one-on-one correspondence between capacity factors and partially-connected
cut, as the following proposition reveals.

\begin{proposition}
\label{pros:par_cut} Let $\mathcal{N} = (V,E,s,t)$ be a
point-to-point network with unit capacity, i.e.,
$C_{\mathcal{N}}(s,t) = 1$. $F$ is a capacity factor if and only if
$F$ is a partially connected $s$-$t$ cut of $\mathcal{N}$.
\end{proposition}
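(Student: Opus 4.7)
The plan is to exploit two facts: in a unit-capacity network a capacity factor is precisely a minimal $s$-$t$ cut, and the cut canonically associated with the component of $s$ after deletion already witnesses $F$.

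For the forward direction, suppose $F$ is a CF and define $V_1 = \{v \in V : \text{there is an } s\text{-}v \text{ path in } \mathcal{N}\setminus F\}$, so that $s \in V_1$ and $t \notin V_1$. Every edge from $V_1$ to $\overline{V_1}$ must lie in $F$, hence $[V_1,\overline{V_1}] \subseteq F$; since $[V_1,\overline{V_1}]$ is itself an $s$-$t$ cut, the minimality condition in the CF definition forces $F = [V_1,\overline{V_1}]$. To verify partial connectivity, fix any $e = \langle u, v \rangle \in F$. By the minimality clause, $\mathcal{N}\setminus(F\setminus\{e\})$ still admits an $s$-$t$ path $P$, and since $\mathcal{N}\setminus F$ admits none, $P$ must traverse $e$. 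The prefix of $P$ from $s$ to $u$ uses only edges outside $F$, so all its vertices are reachable from $s$ in $\mathcal{N}\setminus F$ and therefore lie in $V_1$; this yields an $s$-$u$ path in $\mathcal{N}(V_1)$. For the suffix from $v$ to $t$, if some intermediate vertex $w$ were in $V_1$, then concatenating an $s$-$w$ path in $\mathcal{N}\setminus F$ with the $w$-$t$ portion of the suffix (which also avoids $F$) would place $t$ in $V_1$, contradicting $t \in \overline{V_1}$; hence every vertex of the suffix lies in $\overline{V_1}$, giving the desired $v$-$t$ path in $\mathcal{N}(\overline{V_1})$.

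For the reverse direction, suppose $F = [V_1,\overline{V_1}]$ is a partially connected $s$-$t$ cut. Then $C_{\mathcal{N}\setminus F}(s,t) = 0 < 1 = C_{\mathcal{N}}(s,t)$, giving the first CF condition. For minimality, fix $e = \langle u, v\rangle \in F$; by the partially-connected hypothesis there is an $s$-$u$ path inside $\mathcal{N}(V_1)$ and a $v$-$t$ path inside $\mathcal{N}(\overline{V_1})$. All edges of these two paths stay within $V_1$ or within $\overline{V_1}$, so none lie in $F = [V_1,\overline{V_1}]$. Concatenating them through $e$ gives an $s$-$t$ path in $\mathcal{N}\setminus(F\setminus\{e\})$, so $C_{\mathcal{N}\setminus(F\setminus\{e\})}(s,t) \geq 1$, establishing the second CF condition.

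The main obstacle is the suffix argument in the forward direction: one must rule out that an interior vertex of the $v$-to-$t$ portion of $P$ accidentally sits in $V_1$. The resolution is to observe that the entire suffix consists of edges outside $F$, so any intrusion into $V_1$ would propagate forward to place $t$ in $V_1$, contradicting the definition of $V_1$. Everything else is bookkeeping with the definition of $[V_1,\overline{V_1}]$ and the minimality clause of the CF definition.
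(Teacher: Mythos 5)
Your proof is correct and follows essentially the same route as the paper: both directions hinge on identifying $F$ with the cut $[V_1,\overline{V_1}]$, where $V_1$ is the set of vertices reachable from $s$ in $\mathcal{N}\setminus F$, and on the fact that restoring any single edge of $F$ re-creates an $s$-$t$ path. The only cosmetic difference is that the paper additionally introduces the set of vertices that can reach $t$ to locate the heads of the edges of $F$, whereas you extract the same information from the suffix of the restored path.
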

\begin{proof} \textbf{Necessity:} Assume $F = [V_1,
\overline{V_1}]$, where $[V_1, \overline{V_1}]$ is a
partially-connected $s$-$t$ set. Since $F$ is a cut, we
have $C_{\mathcal{N} \backslash F}(s,t) = 0$. By the definition of a
partially-connected $s$-$t$ cut, for any edge $e = \langle u, v
\rangle \in F$, there exist a path from $s$ to $u$ in
$\mathcal{N}(V_1)$ and a path from $v$ to $t$ in
$\mathcal{N}(\overline{V_1})$. Therefore, $C_{\mathcal{N} \backslash
F \cup \{ e \}}(s,t) = 1$, which implies that $F$ is a capacity
factor.

\textbf{Sufficiency:} Assume $F$ is a capacity factor of
$\mathcal{N}$, we have $C_{\mathcal{N} \backslash F}(s,t) \leq
C_{\mathcal{N}}(s,t) - 1 = 0$. Therefore, $s$ and $t$ are
disconnected in the network $\mathcal{N} \backslash F$. Denote the
vertices reachable from $s$ in $\mathcal{N} \backslash F$ by $V_1$
(including $s$), the vertices that could reach $t$ in $\mathcal{N}
\backslash F$ by $V_2$ (including $t$), and the remaining ones by
$V_3$.

Since $F$ is a capacity factor, adding an arbitrary edge of $F$ in
the network $\mathcal{N} \backslash F$ will make $s$ and $t$
connected, it's clear that all edges in $F$ should be of the form
$\langle u, v \rangle$, where $u \in V_1$ and $v \in V_2$, which
implies $F \subset [V_1, V_2]$. Having considered $V_1$ and $V_2$
are disconnected, we conclude $F = [V_1, V_2]$. Since both $V_1$,
$V_3$ and $V_3$, $V_2$ are disconnected, we have $[V_1, V_3] = [V_3,
V_2] = \emptyset$. Thus, $F = [V_1, V_2] = [V_1, V_2 \cup V_3] =
[V_1, \overline{V_1}]$, which implies $F$ is a partially-connected
$s$-$t$ cut.
\end{proof}

As we have the above characterization of capacity factor in unit-flow
network, the following reduction is similar with the reduction from
NAESET to maximal cut \cite{Sis06}.

\vspace{0.3cm}
\begin{theorem}
\label{thm:NPC} MCF is NP-complete.
\end{theorem}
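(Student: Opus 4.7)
The plan is the standard two-step NP-completeness template, with Proposition \ref{pros:par_cut} as the crucial bridge. For membership in NP, a candidate edge set $F$ serves as a polynomial-size witness: verify $|F|\ge k$, compute $C_{\mathcal{N}\setminus F}(s,t)$, and for every $e\in F$ compute $C_{\mathcal{N}\setminus(F\setminus\{e\})}(s,t)$ to confirm both conditions of the capacity-factor definition. Each max-flow call runs in polynomial time (for instance via relabel-to-front \cite{GT86}), so MCF $\in$ NP.

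For NP-hardness I would reduce from NAESAT. Because Proposition \ref{pros:par_cut} identifies capacity factors with partially-connected $s$-$t$ cuts whenever the network has unit capacity, the task reduces to constructing, from a 3-CNF formula $\varphi$ on variables $x_1,\ldots,x_n$ and clauses $C_1,\ldots,C_m$, a point-to-point network $\mathcal{N}_\varphi=(V,E,s,t)$ with $C_{\mathcal{N}_\varphi}(s,t)=1$ together with a threshold $k_\varphi$, such that $\mathcal{N}_\varphi$ admits a partially-connected $s$-$t$ cut of size at least $k_\varphi$ if and only if $\varphi$ is NAE-satisfiable. The construction mirrors the textbook NAESAT-to-MAXCUT reduction: for each variable $x_i$ introduce a literal pair $\{v_{x_i},v_{\bar x_i}\}$ joined by variable-gadget edges of high multiplicity, and for each clause $C_j=(\ell_1\vee\ell_2\vee\ell_3)$ attach a directed triangle-like gadget on $\{v_{\ell_1},v_{\ell_2},v_{\ell_3}\}$. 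A NAE-satisfying assignment induces a partition $\{V_1,\overline{V_1}\}$ placing true-literal vertices on the $s$-side and false-literal vertices on the $t$-side, yielding a cut whose size counts exactly the variable-gadget edges plus the clause-gadget edges whose endpoints end up on opposite sides.

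The main obstacle I anticipate is grafting this essentially undirected MAXCUT-style gadget into a unit-flow directed $s$-$t$ network while preserving an exact correspondence. Three conditions must be controlled simultaneously. First, the global max flow of $\mathcal{N}_\varphi$ must be exactly $1$, so that Proposition \ref{pros:par_cut} applies; I would enforce this by routing all $s$-to-$t$ reachability through a single bottleneck edge, so that no two edge-disjoint $s$-$t$ paths ever exist. Second, for the intended NAE-induced partition every cut edge $\langle u,v\rangle$ must satisfy the partial-connectivity requirement, namely that $u$ is reachable from $s$ inside $\mathcal{N}_\varphi(V_1)$ and $v$ reaches $t$ inside $\mathcal{N}_\varphi(\overline{V_1})$; this constrains the orientations of the variable- and clause-gadget edges, and will be the most delicate case analysis. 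Third, the variable-gadget multiplicities must dominate the total clause-gadget weight so that any partition violating a literal pairing (i.e.\ placing both $v_{x_i}$ and $v_{\bar x_i}$ on the same side) or leaving some clause monochromatic yields a strictly smaller partially-connected cut than $k_\varphi$. Once these conditions are verified, the reduction is polynomial in $|\varphi|$, and combined with the NP-completeness of NAESAT \cite{Sis06} it completes the proof.
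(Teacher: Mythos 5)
Your proposal follows essentially the same route as the paper's proof: NP membership by verifying the two capacity-factor conditions with polynomially many max-flow computations, and NP-hardness by a NAESAT reduction that uses Proposition \ref{pros:par_cut} to identify capacity factors with partially connected $s$-$t$ cuts in a unit-flow network built like the NAESAT-to-MAXCUT gadget (literal-pair vertices with dominating edge multiplicities, clause triangles, a single bottleneck to force max flow $1$, and a counting threshold). The delicate points you flag are resolved in the paper exactly as you anticipate: bidirectional crossing and forcing edges plus connecting edges from $s'$ to every literal vertex and from every literal vertex to $t'$ guarantee the partial-connectivity requirement, the multiplicities $4m$ and $6mn$ make violations strictly suboptimal, and the threshold is $k=10mn+2m+2n$.
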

\textbf{Proof:} Firstly, we claim MCF is in NP. Providing the verifier of a capacity $F$ with $|F| \geq m$,
it's easy to check $F$ is a capacity factor by testing
$C_{N\backslash F}(s, t) = C_{N \backslash F}(s, t) - 1$ and
$C_{N\backslash F \cup e}(s,t) = C_{N\backslash F}(s,t)$ for every $e
\in F$, which can be done by running network flow algorithm for $|F|+1$
times. Therefore, given such a proof, there is a verifier in
polynomial time, which implies that MCF is in NP.

Secondly, we shall reduce NAESAT to MCF. Given an expression $T$
consist of $m$ clauses with three literals each, we will construct a
network $\mathcal{N} = (V, E, s, t)$ and an integer $k$, such that
$T$ is in NAESAT if and only if $\langle \mathcal{N}, k \rangle$ is
in MCF.

Suppose that the clauses are $C_1, \ldots, C_m$, and the variables
appearing in them $x_1, \ldots, x_n$. At first, we set $2n$
vertices, which are denoted by $x_1$, $\ldots$, $x_n$, $\neg x_1$,
$\ldots$, $\neg x_n$, and four additional vertices $s, s', t,
t'$, where $s, t$ are the source and sink separately.

Then we add some edges which can be classified into the following 3
categories.
\begin{itemize}
\item \textbf{Crossing edges:} For each clause $C_i = (x \vee y \vee z)$ add the following
bidirectional edges $\langle x, y \rangle$, $\langle x, z \rangle$,
$\langle y, z \rangle$ respectively if no self-loop are created.
Note that $x = y$ might happen, whereas $x = y = z$ is impossible,
which implies there are at least two and at most three bidirectional
edges will be created for each clause.
\item \textbf{Forcing edges:} For each pair of $\langle x_i, \neg x_i
\rangle$, $i=1,2,\ldots, n.$, add $4m$ bidirectional edges between
them. Add $6mn$ edges between $s'$ and $t'$.
\item \textbf{Connecting edges:} Add edges $\langle s', x_i \rangle$, $\langle s', \neg x_i
\rangle$, $\langle x_i, t \rangle$, $\langle \neg x_i, t \rangle$,
$i = 1, 2, \ldots, n$. Add edges $\langle s, s' \rangle$,
$\langle t, t' \rangle$.
\end{itemize}
Finally, set $k = 10mn+2m+2n$ and our construction is complete.

Now, we show that expression $T$ is in NAESAT if and only if the
corresponding constructed $\langle \mathcal{N}, k \rangle$ is in
MCF.

\textbf{T $\in$ NAESAT $\Rightarrow$ $\langle \mathcal{N}, k \rangle \in$ MCF: }Put the vertices of true literals on the left hand side
with $s$ and $s'$, while put those of false on the right hand
side with $t'$ and $t$. Having considered the fact that each
clause has both true and false literal(s), there will be exactly two
``crossing edges'' across two piles of vertices. Taking account of
all the ``forcing edges'' and ``connecting edges'', there will be
exactly $6mn+4mn+2m+2n = k$ edges in the cut. In addition, it is
clearly checked that adding an arbitrary edge in the cut will make
$s$ and $t$ connected.

\textbf{$\langle \mathcal{N}, k \rangle \in$ MCF $\Rightarrow$ T $\in$ NAESAT: } By Proposition \ref{pros:par_cut},
we know that any capacity factor $F$ of $\mathcal{N}$ is a partially
connected $s$-$t$ cut. Hence, we consider the maximum possible
partially connected $s$-$t$ cut of $\mathcal{N}$ instead. Suppose
$[V_1, \overline{V_1}]$ is a partially connected $s$-$t$ cut with
maximum size. First of all, we claim $s' \in V_1$ and
$t' \in \overline{V_1}$, since $6mn$ number of edges have
overwhelming impact on the size of $[V_1, \overline{V_1}]$.
Secondly, we claim $x_i$ and $\neg x_i$ must lie in different sides
of the cut, since there are $4m$ bidirectional edges between them,
which also has overwhelming impact on the size of cut when
$s'$ and $t'$ are fixed. After that, the contribution
of edges $\langle s', x_i \rangle$, $\langle s'$,
$\neg x_i \rangle$, $\langle x_i, t' \rangle$, $\langle \neg
x_i, t' \rangle$, $i = 1, 2, \ldots, n$, to the size of
$[V_1, \overline{V_1}]$ is fixed. No matter $x_i \in V_1$ or $\neg
\in V_1$, there are exactly two edges in $[V_1, \overline{V_1}]$ for
each $i$. Finally, we consider crossing edges for  each clause. If
$C_i=(x\vee y\vee z)$ with $x\neq y$ and $y \neq z$, among six edges
$\langle x, y \rangle$, $\langle x, z \rangle$, $\langle y,
x\rangle$, $\langle y, z \rangle$, $\langle z, x \rangle$, $\langle
z, y \rangle$, at most two of them could be in $[V_1,
\overline{V_1}]$, if $x, y, z$ do not lie on the same side of the
cut. If $C_i = (x\vee x\vee y)$ with $x \neq y$, among $\langle x, y
\rangle_2$, $\langle y, x \rangle_2$, there are also at most two of
them could be in $[V_1, \overline{V_1}]$ if $x$ and $y$ lie in
different side of the cut. From above discussion, we could see $|F|$
is upper-bounded by $6mn+4mn+2n+2m = k$. The the number is achieved
if for each clause $C_i = (x\vee y\vee z)$, $x, y, z$ do not lie in
the same side of the cut. Assigning the literals in $V_1$ by true and those in $\overline{V_1}$ by
false, we will see it's an assignment that renders $T \in$ NAESAT.
$\hfill{} \blacksquare$

\vspace{0.3cm} To demonstrate the construction of Theorem
\ref{thm:NPC}, we present the following example.

%\begin{figure}[h]
%\label{fig:NPC}
%\centerline{\includegraphics[width=0.7\textwidth]{NPC.ps}}
%\caption{The network constructed for a given expression}
%\end{figure}
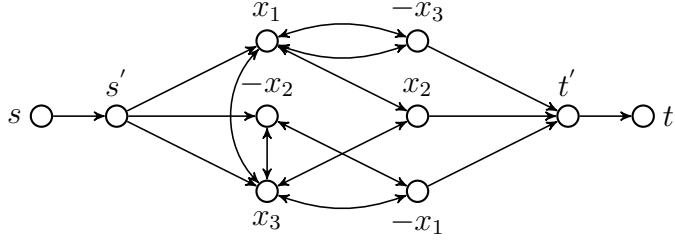
\begin{figure}
\begin{center}
\begin{tikzpicture}
[inner sep=1mm,
cir/.style={circle,draw,thick},
pre/.style={->,>=stealth',semithick},
bi/.style={<->,>=stealth',semithick}]
\node (s) at (0,0) [cir,label=left:$s$]{};
\node (v0) at (1,0) [cir,label=above:$s^{'}$]{};
\node (v1) at (3,1) [cir,label=above:$x_{1}$]{};
\node (v2) at (5,1) [cir,label=above:$-x_{3}$]{};
\node (v3) at (3,0) [cir,label=above:$-x_{2}$]{};
\node (v4) at (5,0) [cir,label=above:$x_{2}$]{};
\node (v5) at (3,-1) [cir,label=below:$x_{3}$]{};
\node (v6) at (5,-1) [cir,label=below:$-x_{1}$]{};
\node (v7) at (7,0) [cir,label=above:$t^{'}$]{};
\node (t) at (8,0) [cir,label=right:$t$]{};
\draw [pre] (s) to (v0);
\draw [pre] (v0) to (v1);
\draw [pre] (v0) to (v3);
\draw [pre] (v0) to (v5);
\draw [bi] (v1) to [bend right=45] (v5);
\draw [bi] (v1) to [bend right=20] (v2);
\draw [bi] (v1) to [bend left=20] (v2);
\draw [bi] (v1) to (v4);
\draw [bi] (v3) to (v6);
\draw [bi] (v5) to (v4);
\draw [bi] (v5) to [bend right=20] (v6);
\draw [bi] (v3) to (v5);
\draw [pre] (v2) to (v7);
\draw [pre] (v4) to (v7);
\draw [pre] (v6) to (v7);
\draw [pre] (v7) to (t);
\end{tikzpicture}
\caption{Reduction from NAESAT to MCF}
\label{fig:NPC}
\end{center}
\end{figure}

Given the expression $T = (x_1 \vee x_2 \vee x_3)\wedge(x_1 \vee
x_3 \vee \neg x_3)\wedge(\neg x_1 \vee \neg x_2 \vee x_3)$, we
construct a network $\mathcal{N} = (V, E, s, t)$ according to the
proof of Theorem \ref{thm:NPC} shown in Figure \ref{fig:NPC}. Note
that all the ``forcing edges'', i.e. the edges between $x_i$ and
$\neg x_i$, the edges between $s'$ and $t'$, as well
as the ``connecting edges'' $\langle s', \neg x_1 \rangle$,
$\langle s', x_2 \rangle$, $\langle s', \neg x_3
\rangle$, $\langle x_1, t' \rangle$, $\langle \neg x_2,
t' \rangle$, $\langle x_3, t' \rangle$ are not drawn
in the figure. In this case, $m = 3$, $n = 3$ and $k = 96$.

It's clear that $x_1 = 1$, $x_2 = 0$ and $x_3 = 1$ is an assignment
makes $T$ in NAESAT. If we put vertices $x_1$, $\neg x_2$, $x_3$ on
the left and $\neg x_1$, $x_2$, $\neg x_3$ on the right, just as
what is drawn on Figure \ref{fig:NPC}, we obtain a maximum partially
connected $s$-$t$ cut $[V_1, \overline{V_1}]$ with size $k$, where
$V_1 = \{ s, s', x_1, \neg x_2, x_3 \}$.

\vspace{0.5cm}
Since MCF is NP-complete problem, deciding the maximum capacity
factor containing some specific edge is also NP-complete. Otherwise,
by enumerating all the edges of in the latter problem, we can solve
MCF in polynomial time, which is a contradiction.

Compared to the problem of the maximum-sized capacity factor,
calculating the capacity rank seems to be more important. As far as
we know, there is no polynomial time algorithm to calculate the
capacity rank in a general network. Furthermore, we don't know
whether it is in NP-complete. However, there are some evidences
indicating this problem is not easy.

\begin{theorem} If the
capacity rank of an arbitrary edge in a unit capacity network can be computed in time $f(|V|,
|E|)$, then the maximum flow can be solved in time $f(|V|+2, 2|E|+2)$ for
any any point-to-point network $\mathcal{N} = (V, E, s, t)$.

In other words, the time complexity of calculating capacity rank is
lower bounded by calculating the maximum flow.
\end{theorem}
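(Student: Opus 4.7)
The plan is to reduce max-flow to a single capacity-rank query on a slightly augmented network. Assume $|E|\geq 1$; the case $|E|=0$ gives $k=C_{\mathcal N}(s,t)=0$ trivially.

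I would build $\mathcal{N}' = (V',E',s',t')$ by setting $V' = V \cup \{s',t'\}$ and
$$ E' \;=\; E \;\cup\; \{e_s, e_t\} \;\cup\; \{e^*_1,\ldots,e^*_{|E|}\}, $$
where $e_s = \langle s',s\rangle$, $e_t = \langle t,t'\rangle$, and each $e^*_i$ is a parallel ``direct'' edge from $s$ to $t$. This gives $|V'| = |V|+2$ and $|E'| = 2|E|+2$, matching the bounds in the statement.

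The first easy claim is that $C_{\mathcal{N}'}(s',t') = 1$, because $e_s$ is the only edge leaving $s'$. Consequently every $1$-CF of $\mathcal{N}'$ is exactly an inclusion-minimal $s'$-$t'$ cut. Next I would analyse the minimal cuts containing $e^*_1$. Any minimal cut has the form $[V_1,V_2]$ with $s'\in V_1$, $t'\in V_2$; since $e^*_1\in[V_1,V_2]$, we must have $s\in V_1$ and $t\in V_2$, which forces all $|E|$ parallel direct edges into the cut (and excludes $e_s$ and $e_t$, whose endpoints lie on a single side). Writing such a cut as $\{e^*_1,\ldots,e^*_{|E|}\}\cup C$ with $C\subseteq E$, the requirement that no $V_1$-$V_2$ edge of $\mathcal{N}$ remains forces $C$ to be itself an $s$-$t$ cut of $\mathcal{N}$.

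Conversely, for any minimal $s$-$t$ cut $C$ of $\mathcal{N}$ of size $k=C_{\mathcal N}(s,t)$, the set $F=\{e^*_1,\ldots,e^*_{|E|}\}\cup C$ is a minimal $s'$-$t'$ cut of $\mathcal{N}'$: restoring any $e^*_i$ reopens the path $s'\to s\to t\to t'$ via $e_s,e^*_i,e_t$, and restoring any $e\in C$ reopens an $s$-$t$ path in $\mathcal{N}$ by minimality of $C$, which again yields an $s'$-$t'$ path through $e_s$ and $e_t$. Combining the two directions, the minimum-size $1$-CF of $\mathcal{N}'$ containing $e^*_1$ has size exactly $|E| + k$, i.e.\ $\CR_{\mathcal{N}'}(e^*_1) = |E|+k$.

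Hence one call to the hypothesised capacity-rank algorithm on $\mathcal{N}'$, followed by the subtraction $k = \CR_{\mathcal{N}'}(e^*_1) - |E|$, returns the maximum flow of $\mathcal{N}$ in time $f(|V|+2,2|E|+2)$, absorbing the $O(|V|+|E|)$ construction overhead into $f$. The most delicate step is the minimality verification in the converse direction; this is the point on which the construction stands or falls, because without the $|E|$ parallel direct copies of $\langle s,t\rangle$ forcing every cut containing $e^*_1$ to include all of them, one could substitute the cheap cuts $\{e_s\}$ or $\{e_t\}$ and collapse $\CR_{\mathcal N'}(e^*_1)$ to $1$, destroying the encoding of $k$.
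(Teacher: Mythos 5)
Your reduction is correct and is essentially the paper's: both augment $\mathcal{N}$ with two new vertices and a bundle of $|E|$ parallel ``forcing'' edges to obtain a unit-max-flow network with $|V|+2$ vertices and $2|E|+2$ edges, so that any capacity factor containing the queried edge must contain a minimum $s$-$t$ cut of $\mathcal{N}$, and a single capacity-rank query recovers $C_{\mathcal{N}}(s,t)$. The only difference is cosmetic --- the paper puts the bundle on $\langle s',s\rangle$, adds the bottleneck $\langle t,t'\rangle$, and queries the bypass edge $\langle s',t\rangle$ so the rank equals $C_{\mathcal{N}}(s,t)$ (up to counting the queried edge itself), whereas you put the bundle on $\langle s,t\rangle$, query one copy, and subtract $|E|$; your minimality verification via minimal (partially connected) cuts is precisely the step the paper leaves to the reader.
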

\begin{proof} Assume there is an algorithm to compute the capacity
rank of an arbitrary edge in any unit-capacity point-to-point network in time
$f(|V|, |E|)$. For a point-to-point network $\mathcal{N} = (V, E, s,
t)$ of maximum flow problem, we construct a corresponding network
$\mathcal{N}=(V',E',s',t')$ and prove that $CR_{\mathcal{N}'}(e) =
C_{\mathcal{N}}(s,t)$, where $e := \langle s', t \rangle$.

The construction is as follows: Let
$$V' = V \cup \{s', t'\}$$
and
$$E' = E \cup \{\langle s', t \rangle\, \langle t, t' \rangle, \langle s', s
\rangle_{|E|}\},$$
where $\langle s', s \rangle_{|E|}$ denotes $|E|$ different edges from $s'$ to $s$. It's easy
to show $CR_{\mathcal{N'}}(\langle s', r \rangle) =
C_{\mathcal{N}}(s,t)$, which is left to the reader.

%Since $\langle r, t' \rangle$ is the only edge connecting to the
%source $t'$, and $s', r, t'$ is an $s'$-$t'$ path, it's clear that
%$C_{\mathcal{N}'}(s', t') = 1$. Any capacity factor containing
%$\langle s', r \rangle$ must be a minimal(not minimum) edge set
%separating $s'$ and $t'$. Assume $F$ is such a capacity factor with
%minimum size. Noting $C_{\mathcal{N} \backslash (E \cup \{ \langle
%s', r \rangle \})}(s,t) = 0$, $C_{\mathcal{N} \backslash E}(s,t) =1=
%C_{\mathcal{N}}(s,t)$, and recalling Proposition
%\ref{pros:factor_exist}, we claim there exists a capacity factor
%which is a subset of $E$ union $\{\langle s', r \rangle \}$. Thus,
%$|F| \leq |E\cup \{\langle s', r \rangle \}| = |E| + 1$. Since there
%are $|E|+2$ multiple edges between $s$ and $s'$ as well as $t'$ and
%$r$, we claim $\langle s, s' \rangle$, $\langle t', r \rangle \not
%\in F$. To be a separating set of $s'$ and $t'$, $F$ should cut all
%paths of this form $(s', s, \ldots, t, r, t')$, i.e., $F$ should
%contain a cut of $s$ and $t$, and therefore $|F| \geq
%C_{\mathcal{N}}(s,t) + 1$. On the other hand, let $F$ be the union
%of $\langle s, r \rangle$ and the minimal cut of $s$ and $t$, it's
%easy to check $F$ is a capacity factor. We conclude $F$ is the union
%of $\langle s, r \rangle$ and a minimal cut of $s$ and $t$ and $|F|
%= C_{\mathcal{N}}(s,t) + 1$. The minimal cut problem, which is
%equivalent to the maximal flow problem according to the
%\emph{Max-flow min-cut theorem}, is reduced to the capacity rank
%problem by our construction. Noting that $|V'| = |V|+2$ and
%$|E'|=3|E|+6$, we have complete the proof.
\end{proof}

Many questions in graph theory about edges have
natural analogues for vertices\cite{Wes01}, and the vertices version
 is often harder than that of edges. For example, Eulerian
circuit is defined as a closed trail containing all edges, whereas
Hamilton cycle is a closed path visiting all the vertices exactly
once; independent set has no adjacent vertice, whereas matching has
no ``adjacent'' edges. However, deciding whether a graph has a
Eulerian circuit is easy, while deciding whether a graph has a
Hamilton path is NP-complete; finding a maximum independent set is
NP-hard, while finding a maximum matching has a polynomial time
algorithm. It's natural to define the vertex capacity factor and
investigate the relationship between (edge) capacity factor.

\begin{definition}
Let $\mathcal{N}=(V,E,s,t)$ be a point-to-point network. A nonempty
subset $F$ of $V$ is a vertex capacity factor of $\mathcal{N}$ if and only if
the following two conditions hold:
\begin{enumerate}
\item $C_{\mathcal{N} \backslash V}(s,t) < C_{\mathcal{N}}(s,t)$;
\item $C_{\mathcal{N} \backslash V'}=C_{\mathcal{N}}(s,t)$ for any
proper subset $V' of F$.
\end{enumerate}
$\mathcal{N} \backslash V$ denotes the induced network formed by
deleting $V$ and all the edges adjacent to edges in $V$ in
$\mathcal{N}$. Similarly, the vertex capacity rank of a vertex $v$
is defined as the minimum size of the vertex capacity factor
containing $v$.
\end{definition}

Just as many analogue problem on vertices and edges, the vertex version
captures the edge version through line graph.

%\begin{theorem}
%\label{thm:dual} If the vertex capacity factor can be computed in
%time $f(|V|,|E|)$ for any point-to-point network $\mathcal{N} = (V,
%E, s, t)$, then the (edge) capacity factor can be computed in time
%$f(2|E|+2, \frac{1}{2}|E|^2 + 3|E|)$.
%\end{theorem}
%\begin{proof}
 For $\mathcal{N} = (V, E, s, t)$, the line
network $\mathcal{N}' = (V', E', s', t')$ as follows.
\begin{itemize}
\item $V' = \{e^{\text{in}}, e^{\text{out}} : e \in E\} \cup \{s', t'\}$.
\item $E' = \{\langle e^{\text{in}}, e^{\text{out}} \rangle : e \in E\} \cup
\{ \langle e^{\text{out}}_1, e^{\text{in}}_2 \rangle : \textrm{head}(e_1) =
\textrm{tail}(e_2), e_1, e_2 \in E \} \cup \{ \langle
s', e^{\text{in}} \rangle : \textrm{tail}(e) = s, e \in E\} \cup \{
(e^{\text{out}}, t') : \textrm{head}(e) = t, e \in E \}$.
\end{itemize}
Slightly different from the definition of line graph, we split a vertex
representing an edge $e$ of $E$ in two, say $e^{\text{in}}$ and $e^{\text{out}}$,
and add an directional edge $\langle e^{\text{in}}, e^{\text{out}} \rangle$. This
modification guarantees that the capacity of each vertex is upper
bounded by $1$.

Now we will show $F = \{e_1, e_2, \ldots, e_m \}$ is a capacity
factor in $\mathcal{N}$ if and only if $F' = \{e'_1, e'_2, \ldots,
e'_m \mid e_i \in \{e^{\text{in}}_i, e^{\text{out}}_i \}, i=1,2,\ldots,m \}$ is a
vertex capacity in $\mathcal{N'}$. The key fact contributing to the
above conclusion is that any $m$ edge-disjoint $s$-$t$ paths in
$\mathcal{N}$ corresponds to $m$ vertex-disjoint $s'$-$t'$ paths in
$\mathcal{N'}$ (except the starting vertex $s'$ and ending vertex
$t'$), and the correspondence is one-to-one.

For one direction, assuming $F = \{e_1, e_2, \ldots, e_m \}$ is a
capacity factor in $\mathcal{N}$, let $F' = \{e'_1, e'_2, \ldots,
e'_m : e_i \in \{e^{\text{in}}_i, e^{\text{out}}_i \}, i=1,2,\ldots,m \}$.
Since $C_{\mathcal{N} \backslash F}(s,t) < C_{\mathcal{N}}(s,t)$,
for any maximal flow $f$ on $\mathcal{N}$, the intersection of $F$
and $f$ is not empty. Since the capacity of edges in $\mathcal{N}$
are all integers, $f$ can be decomposed into $|f|$ edge-disjoint
$s$-$t$ paths, which corresponds to $|f|$ vertex-disjoint $s'$-$t'$
paths in $\mathcal{N'}$ and vice versa. Therefore, the intersection
of $F'$ and arbitrary set of $|f|$ vertex-disjoint paths in
$\mathcal{N}'$ is not empty, which implies $C_{\mathcal{N'}
\backslash F'}(s,t) < C_{\mathcal{N'}}(s,t)$. For any subset $G$ of
$F$, since $C_{\mathcal{N} \backslash G}(s,t) =
C_{\mathcal{N}}(s,t)$, there exists a set of $C_{\mathcal{N}}(s,t)$
edge-disjoint paths, which has no common edges of $G$. Therefore,
$G'$ has  no common vertices with the corresponding
$C_{\mathcal{N}}(s,t)=C_{\mathcal{N}'}(s,t)$ vertex-disjoint paths
in $\mathcal{N}'$, which implies $C_{\mathcal{N}' \backslash
G'}(s,t) = C_{\mathcal{N'}}(s,t)$. Thus, $F'$ is a vertex capacity
factor of $\mathcal{N}'$. Using the one-on-one correspondence, the
other direction is similar to prove.

Following is a concrete example. Consider the network $\mathcal{N}=(V,E,s,t)$ on the top of
Figure \ref{fig:dual}. The corresponding dual network
$\mathcal{N'}=(V',E',s',t')$ is shown on the bottom. There are 7
capacity factors in $\mathcal{N}$, which are $\{e_1\}$, $\{e_2\}$,
$\{e_7\}$, $\{e_3, e_5\}$, $\{e_3, e_6\}$, $\{e_4, e_5\}$, $\{e_4,
e_6\}$. And there are 7 classes of vertex capacity factors in
$\mathcal{N}'$, which are exactly $\{e'_1\}$, $\{e'_2\}$,
$\{e'_7\}$, $\{e'_3, e'_5\}$, $\{e'_3, e'_6\}$, $\{e'_4, e'_5\}$,
$\{e'_4, e'_6\}$, where $e'_i \in \{ e_i^{\text{in}}, e_i^{\text{out}} \}$. And
there is a one-to-one correspondence between them.
\begin{figure}[h!]
\begin{center}
\begin{tikzpicture}
[inner sep=1mm,
cir/.style={circle,draw,thick},
pre/.style={->,>=stealth',semithick}]
\node (s) at (0,0) [cir,label=left:$s$]{};
\node (v1) at (2,1) [cir]{};
\node (v2) at (2,0) [cir]{};
\node (v3) at (4,0) [cir]{};
\node (v4) at (4,-1) [cir]{};
\node (t) at (6,0) [cir,label=right:$t$]{};
\draw [pre] (s) to [bend left=15] node [above]{$e_1$} (v1);
\draw [pre] (s) to node [below]{$e_2$} (v2);
\draw [pre] (v2) to node [above]{$e_3$} (v3);
\draw [pre] (v3) to node [above]{$e_4$} (t);
\draw [pre] (v1) to [bend left=15] node [above]{$e_7$} (t);
\draw [pre] (v2) to [bend right=15] node [below]{$e_5$} (v4);
\draw [pre] (v4) to [bend right=15] node [below]{$e_6$} (t);

\node (s) at (-1,-4) [cir,label=left:$s^{'}$]{};
\node (v21) at (0,-4) [cir,label=above:$e_2^i$]{};
\node (v22) at (1,-4) [cir,label=above:$e_2^o$]{};
\node (v31) at (2,-4) [cir,label=above:$e_3^i$]{};
\node (v32) at (3,-4) [cir,label=above:$e_3^o$]{};
\node (v41) at (4,-4) [cir,label=above:$e_4^i$]{};
\node (v42) at (5,-4) [cir,label=above:$e_4^o$]{};
\node (v11) at (0,-2.5) [cir,label=above:$e_1^i$]{};
\node (v12) at (1,-2.5) [cir,label=above:$e_1^o$]{};
\node (v71) at (3,-2.5) [cir,label=above:$e_7^i$]{};
\node (v72) at (4,-2.5) [cir,label=above:$e_7^o$]{};
\node (v51) at (2,-5.5) [cir,label=above:$e_5^i$]{};
\node (v52) at (3,-5.5) [cir,label=above:$e_5^o$]{};
\node (v61) at (4,-5.5) [cir,label=above:$e_6^i$]{};
\node (v62) at (5,-5.5) [cir,label=above:$e_6^o$]{};
\node (t) at (6,-4) [cir,label=right:$t^{'}$]{};
\draw [pre] (s) to (v21);
\draw [pre] (v21) to (v22);
\draw [pre] (v22) to (v31);
\draw [pre] (v31) to (v32);
\draw [pre] (v32) to (v41);
\draw [pre] (v41) to (v42);
\draw [pre] (v42) to (t);

\draw [pre] (s) to [bend left=30] (v11);
\draw [pre] (v11) to (v12);
\draw [pre] (v12) to (v71);
\draw [pre] (v71) to (v72);
\draw [pre] (v72) to [bend left=30] (t);

\draw [pre] (v22) to [bend right=30] (v51);
\draw [pre] (v51) to (v52);
\draw [pre] (v52) to (v61);
\draw [pre] (v61) to (v62);
\draw [pre] (v62) to [bend right=30] (t);

\draw (3.5,-2.5) [dotted,thick] ellipse (20pt and 10pt);
\draw (0.5,-2.5) [dotted,thick] ellipse (20pt and 10pt);
\draw (0.5,-4) [dotted,thick] ellipse (20pt and 10pt);
\draw (4.5,-4) [dotted,thick] ellipse (20pt and 10pt);
\draw (2.5,-4) [dotted,thick] ellipse (20pt and 10pt);
\draw (4.5,-5.5) [dotted,thick] ellipse (20pt and 10pt);
\draw (2.5,-5.5) [dotted,thick] ellipse (20pt and 10pt);
\end{tikzpicture}
\caption{Network and its line graph}
\label{fig:dual}
\end{center}
\end{figure}
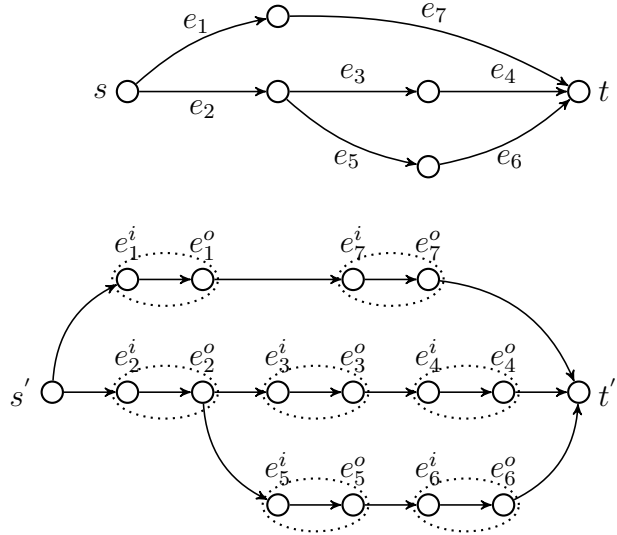

%%   \cite{key}         ==>>  [#]
%%   \cite[chap. 2]{key} ==>> [#, chap. 2]
%%

%% References with bibTeX database:

%%\bibliographystyle{elsarticle-num}
%%\bibliography{<your-bib-database>}

%% Authors are advised to submit their bibtex database files. They are
%% requested to list a bibtex style file in the manuscript if they do
%% not want to use elsarticle-num.bst.

%% References without bibTeX database:

\end{document}